\numberwithin{equation}{section}
\theoremstyle{plain}
\newtheorem{theorem}{Theorem}[section]
\newtheorem{lemma}{Lemma}[section]
\newtheorem{proposition}{Proposition}[section]
\theoremstyle{definition}
\newtheorem{remark}{Remark}[section]
\newcommand{\X}{\mathcal{X}}
\newcommand{\mcx}{(X_n)_{n \geq 1}}
\newcommand{\R}{\mathbb{R}}
\newcommand{\numbereqn}{\addtocounter{equation}{1}\tag{\theequation}} 
\renewcommand{\bar}{\overline}
\DeclareMathOperator{\var}{var}
\DeclareMathOperator{\cov}{cov}
\DeclareMathOperator{\N}{N}
\DeclareMathOperator*{\ESS}{ESS}
\DeclareMathOperator*{\BM}{BM} 
\DeclareMathOperator*{\diag}{Diag} 
\DeclarePairedDelimiter\floor{\lfloor}{\rfloor}
\begin{document}
    
    \begin{frontmatter}
        
        \title{Estimating accuracy of the MCMC variance estimator: a central limit theorem for  batch means estimators}
        
        \runtitle{CLT for batch means variance estimate}
        
        \begin{aug}
            \author{
                \fnms{Saptarshi} \snm{Chakraborty}\thanksref{m1}, \ead[label=e1]{chakrabs@mskcc.org}
            }
            \author{
                \fnms{Suman K.} \snm{Bhattacharya}\thanksref{m2} \ead[label=e2]{sumankbhattachar@ufl.edu}
            }
            \and
            \author{
                \fnms{Kshitij} \snm{Khare}\thanksref{m2} 
                \ead[label=e3]{kdkhare@stat.ufl.edu}
            }
            
            \runauthor{Chakraborty, Bhattacharya and Khare}

            \affiliation{
                Memorial Sloan-Kettering Cancer Center \thanksmark{m1},
                
                Department of Statistics, University of Florida\thanksmark{m2}   
            }
            
            \address{
                \thanksmark{m1}Department of Epidemiology \& Biostatistics\\
                Memorial Sloan Kettering Cancer Center\\
                485 Lexington Ave\\
                New York, NY 10017, USA \\
                \printead{e1}
            }	
            
            \address{
                \thanksmark{m2}Department of Statistics\\
                University of Florida\\
                101 Griffin Floyd Hall\\
                Gainesville, Florida 32601, USA\\
                \printead{e2} \\
                \printead{e3}
            }
            
        \end{aug}
        
        \begin{abstract}
            The batch means estimator of the MCMC variance is a simple and effective measure of accuracy for MCMC based ergodic averages. Under various regularity conditions, the estimator has been shown to be consistent for the true variance. However, the estimator can be unstable in practice as it depends directly on the raw MCMC output. A measure of accuracy of the batch means estimator itself, ideally in the form of a confidence interval, is therefore desirable.  The asymptotic variance of the batch means estimator is known; however, without any knowledge of asymptotic distribution, asymptotic variances are in general insufficient to describe variability. In this article we prove a central limit theorem for the batch means estimator that allows for the construction of asymptotically accurate confidence intervals for the batch means estimator. Additionally, our results provide a Markov chain analogue of the classical CLT for the sample variance parameter for i.i.d. observations.  Our result assumes standard regularity conditions similar to the ones assumed in the literature for proving consistency. Simulated and real data examples are included as illustrations and applications of the CLT.     
        \end{abstract}
        
        \begin{keyword}[class=MSC]
            \kwd[Primary ]{60J22}
            \kwd[; secondary ]{62F15}
        \end{keyword}
        
        \begin{keyword}
            \kwd{MCMC variance}
            \kwd{batch means estimator}
            \kwd{asymptotic normality}
        \end{keyword}
        
    \end{frontmatter}

    \section{Introduction}
    Markov chain Monte Carlo (MCMC) techniques are indispensable tools of modern day computations. Routinely used in Bayesian analysis and machine learning, a  major application of MCMC lies in the approximation of intractable and often high-dimensional integrals. To elaborate, let $(\X, \mathcal{F}, \nu)$ be an arbitrary measure space and let $\Pi$ be a probability measure on $\X$, with associated density $\pi(\cdot)$ with respect to $\nu$. The quantity of interest is the integral
    \[
    \pi f = E_\pi f :=\int_{\X} f(x) \: d\Pi(x)=\int_{\X} f(x) \:\pi(x) \:\nu(dx)
    \]
    where $f$ is a real-valued, $\Pi-$integrable function on $\X$. In many modern applications, the such an integral is often intractable, i.e., (a) does not have a closed form, (b) deterministic approximations are inefficient, often due to the high dimensionality of $\X$, and (c) cannot be estimated via classical or i.i.d. Monte Carlo techniques as i.i.d. random generation from $\Pi$ is in general infeasible.  Markov chain Monte Carlo (MCMC) techniques are the to-go method of approximation for such integrals. Here, a Markov chain $\mcx$ with an invariant probability distribution $\Pi$ \citep[see, e.g.][for definitions]{meyn:tweedie:2012} is generated using some MCMC sampling technique such as the Gibbs sampler or the Metroplis Hastings algorithms. Then, ergodic averages $\bar f_n := n^{-1} \sum_{i=1}^{n} f(X_i)$ based on realizations of the Markov chain $\mcx$ are used as approximations of $E_\pi f$. 
    
    Measuring the errors incurred in approximations is a critical step in any numerical analysis. It is well known that when a Markov chain is Harris ergodic (i.e., aperiodic, $\phi$-irreducible, and Harris recurrent \citep[see][for definitions]{meyn:tweedie:2012}), then ergodic averages based on realizations of the Markov chain always furnish strongly consistent estimates of the corresponding population quantities \citep[][Theorem 13.0.1]{meyn:tweedie:2012}. In other words, if a Harris ergodic chain is run long enough, then the estimate $\bar f_n$ is \emph{always} guaranteed to provide a reasonable approximation to the otherwise intractable quantity $E_\pi f$ (under some mild regularity conditions on $f$). Determining an MCMC sample (or iteration) size $n$ that justifies this convergence, however, requires a measurement of accuracy. Similar to i.i.d. Monte Carlo estimation, the standard error of $\bar f_n$ obtained from the MCMC central limit theorem (MCMC CLT) is the natural quantity to use for this purpose. MCMC CLT requires additional regularity conditions as compared to its i.i.d. counterpart; if the Markov chain $\mcx$ is geometrically ergodic (see, e.g., \citet{meyn:tweedie:2012} for definitions), and if $E_\pi |f|^{2+\delta}$ for some $\delta > 0$ (or $E_\pi f^2 < \infty$ if $\mcx$ is geometrically ergodic and reversible), it can be shown that as $n \to \infty$
    \[
    \sqrt{n}\left(\bar{f}_n - E_\pi f \right)\xrightarrow{d} \N(0,\sigma^2_f)
    \]
    where $\sigma_f^2$ is the  \emph{MCMC variance} defined as
    \begin{equation} \label{defn_mcmcvar}
    \sigma^2_f = \var_\pi f(X_1) +2\sum_{i=2}^{\infty}\cov_\pi\left(f(X_1),f(X_i)\right).
    \end{equation}
    Here  $\var_\pi$ and $\cov_\pi$ respectively denote the variance and (auto-) covariance computed under the stationary distribution  $\Pi$. Note that other sufficient conditions ensuring the above central limit theorem also exist; see the survey articles of \citet{jones:2004}, and \citet{robert:rosenthal:2004} for more details. When the regularity conditions hold, a natural measure of accuracy for $\bar f_n$ is therefore given by the MCMC standard error (MCMCSE) defined as $\sigma_f/\sqrt{n}$. Note that this formula of MCMCSE, alongside measuring the error in approximation, also helps determine an optimum iteration size $n$ that is required to achieve a pre-specified level of precision, thus providing a stopping rule for terminating MCMC sampling. A related use of $\sigma^2_f$ also lies in the computation of effective sample size $\ESS = n \var_\pi f(X_1)/\sigma^2_f$ \cite{kass:1998, ripley:2009}. ESS measures how $n$ dependent MCMC samples compare to $n$ i.i.d. observations from $\Pi$, thus providing a univariate measure of the quality of the MCMC samples. Thus to summarize, the MCMC variance $\sigma_f^2$ facilitates computation/determination of three crucial aspects of an MCMC implementation, namely  (a) stopping rule for terminating simulation, (b) effective sample size (ESS) of the MCMC draws, and (c)  precision of the MCMC estimate $\bar f_n$. 
    
    In most non-trivial applications, however, the MCMC variance $\sigma^2_f$ is usually unknown, and must be estimated. A substantial literature has been devoted to the estimation of  $\sigma^2_f$  \citep[see, e.g.,][to name a few]{bratley:fox:2011, fishman:2013, geyer:1992, glynn:1990, glynn:whitt:1991, mykland:tierney:yu:1995, roberts:1996, flegal:haran:jones:2008, flegal:jones:2010}, and several methods, such as regerative sampling, spectral variance estimation, and overlapping and non-overlapping batch means estimation, have been developed. In this paper, we focus on the non-overlapping batch means estimator, henceforth called the batch means estimator for simplicity,  where estimation of $\sigma^2_f$ is performed by breaking the $n = a_n b_n$ Markov chain iterations into $a_n$ non-overlapping blocks or batches of equal size $b_n$. Then, for each $k \in \{1,2,\cdots,a_n\}$, one calculates the $k$-th batch mean $\bar{Z}_k := \frac{1}{b_n} \sum_{i=1}^{b_n} Z_{(k-1)b_n+i}$, and the overall mean $\bar{Z} := \frac{1}{a_n} \sum_{i=1}^{a_n} \bar{Z}_k$, where $Z_i=f(X_i)$ for $i=1,2,\dots$, and finally estimates $\sigma^2_f$ by
    \begin{equation} \label{defn_bme}
    \hat{\sigma}^2_{\BM,f} = \hat{\sigma}^2_{\BM,f}(n, a_n, b_n) =  \frac{b_n}{a_n-1}\sum_{k=1}^{a_n}\left(\bar{Z}_k-\bar{Z}\right)^2. 
    \end{equation}
    
    The batch means estimator is straightforward to implement, and can be computed post-hoc without making any changes to the original MCMC algorithm, as opposed to some other methods, such as regeneration sampling. Under various sets of regularity conditions, the batch mean estimator $\hat {\sigma}^2_{\BM, f}$ has been shown to be strongly consistent \citep{damerdji1991strong, hobert2002applicability, jones:2006, flegal:jones:2010} and also mean squared consistent \citep{chien:1997, flegal:jones:2010} for $\sigma_f^2$, provided the batch size $b_n$ and the number of batches $a_n$ both increase with $n$. Note that the estimator depends on the choice of the batch size $b_n$ (and hence the number of batches $a_n = n/b_n$). Optimal selection of the batch-size is still an open problem, and both $b_n = n^{1/2}$ and $b_n = n^{1/3}$ have been deemed desirable in the literature; the former ensures that the batch means $\{\bar Z_k\}$ approach asymptotic normality at the fastest rate (under certain regularity conditions, \citep{chien:1988}), and the latter minimizes the asymptotic mean-squared error of $\hat{\sigma}^2_{\BM,f}$ (under different regularity conditions, \citep{song:schmeiser:1995}). 
    
    It is however important to recognize that consistency alone does not in general justify practical usefulness, and a measurement of accuracy is \emph{always} required to assess the validity of an estimator. It is known that the asymptotic variance of the batch means estimator is given by  $\var \hat{\sigma}^2_{\BM, f} = 2 \sigma_f^4/a_n + o(1/n)$, under various regularity conditions \citep{chien:1997, flegal:jones:2010}. However, without any knowledge of the asymptotic distribution, the asymptotic variance alone is generally insufficient for assessing the accuracy of an estimator. For example, a $\pm 2$ standard error bound does not in general guarantee more than $75\%$ coverage as obtained from the Chebyshev inequality, and to ensure a pre-specified ($95\%$) coverage, a much larger interval ($\sim \pm 4.5$ standard error) is necessary in general. This provides a strong practical motivation for determining the asymptotic distribution of the batch means estimator. To the best of our knowledge, however, no such result is available.  
    
    The main purpose of this paper is to establish a central limit theorem that guarantees asymptotic normality of the batch means estimator under mild and standard regularity conditions (Theorem~\ref{thm_clt_bme}). There are two major motivations for our work. As discussed above, the first motivation lies in the immediate practical implication of this work. As a consequence of the CLT, the use of approximate normal confidence intervals for measuring accuracy of batch means estimators is justified. Given MCMC samples, such intervals can be computed alongside the batch means estimator at virtually no additional cost, and  therefore could be of great practical relevance. The second major motivation comes from a theoretical point of view. Although a central limit theorem for the sample variance of an i.i.d. Monte Carlo estimate is known (can be easily established via delta method, for example), no Markov chain Monte Carlo analogue of this result is available. Our paper provides an answer to this yet-to-be-addressed theoretical question. The proof is quite involved and leverages  operator theory and the martingale central limit theorem \citep[see, e.g.,][]{alj:azrak:melard:2014}, as opposed to the Brownian motion based approach adopted in \citep{flegal:jones:2010}, and the result is analogous to the classical CLT for sample variance in the i.i.d. Monte Carlo case.

    The remainder of this article is organized as follows. In Section~\ref{sec_clt} we state and prove the main central limit theorem along with a few intermediate results. Section~\ref{sec_illus} provides two illustrations of the CLT -- one based on a toy example (Section~\ref{sec_illus_toy}), and one based on a real world example (Section~\ref{sec_illus_blasso}). Proofs of some key propositions and intermediate results are provided in the Appendix.

    \section{A Central Limit Theorem for Batch-Means Estimator} \label{sec_clt}
    
    This section provides our main result, namely, a central theorem for the non-overlapping batch-means standard error estimator. Before stating the theorem, we fix our notations, and review some known results on Markov chains. Let $\mcx$ be a Markov chain on $(\X, \mathcal{F}, \nu)$ with Markov transition density $k(\cdot, \cdot)$, and stationary measure $\Pi$ (with density $\pi$). We denote by $K(\cdot, \cdot)$, the Markov transition function of $\mcx$; in particular, for $x \in \X$ and a Borel set $A \subseteq \X$, $K(x, A) = \int_{A} k(x, x') \: dx'$. For $m \geq 1$, the associated $m$-step Markov transition function is defined in the following inductive fashion
    \[
    K^m(x, A) = \int_{\R^p} K^{m-1}(x', A) \: K(x, dx') = \text{Pr}(X_{m+j} \in A \mid X_j = x)
    \]
    for any $j = 0, 1, \dots$, with $K^1 \equiv K$. The Markov chain $\mcx$ is said to be reversible, if for any $x, x' \in \X$ the \textit{detailed balance condition}
    \[
    \pi(x) K(x, dx') = \pi(x') K(x', dx)
    \]
    is satisfied. Also, the chain $\mcx$  is said to be geometrically ergodic if there exists a constant $\kappa \in [0, 1)$ and a function $Q : \X \to [0, \infty)$ such that for any $x \in \X$ and any $m \in \{1, 2, \dots \}$
    \[
    \|K^m(x, \cdot) - \Pi(\cdot) \| := \sup_{A \subseteq \X} |K^m(x, A) - \Pi(A) | \leq Q(x) \kappa^m.
    \]
    
    \noindent Let us denote by
    \[
    L^2_0(\pi) = \left\{ g: \X \to \R : E_\pi g = \int_\X f(x) \:  d\Pi(x) = 0 \text{ and } 
    E_\pi g^2 = \int_\X g(x)^2 \:  d\Pi(x) < \infty \right\}.
    \]
    This is a Hilbert space where the inner product of $g,h \in L^2_0(\pi)$ is defined as
    \[
    \langle g,h\rangle_\pi = \int_{\X} g(x) \: h(x) \:d\Pi(x) = \int_{\X} g(x) \: h(x) \: \pi(x) \: d\nu(x) 
    \]
    and the corresponding norm is defined by $\|g\|_{\pi} = \sqrt{\langle g, g\rangle}_\pi$. The Markov transition function $K(\cdot, \cdot)$ determines a Markov operator; we shall slightly abuse our notation and denote the associated operator by $K$ as well. More specifically, we shall let $K: L^2_0(\pi) \to L^2_0(\pi)$ denote the operator that maps $g \in L^2_0(\pi)$ to $(Kg)(x) = \int_{\X} g(x') K(x, dx')$. The operator norm of $K$ is defined as $\|K \| = \sup_{g \in L^2_0(\pi) : \|g\|_\pi = 1} \|Kg\|$. It follows that $\| K \| \leq 1$.  \citet{roberts:rosenthal:1997} show that for reversible (self-adjoint) $K$,
    $\| K \| < 1$ if and only if the associated Markov chain $\mcx$ is geometrically ergodic.
    
    The following theorem establishes a CLT for the batch means estimator of MCMC variance.
    
    \begin{theorem} \label{thm_clt_bme}
        Suppose ${(X_n)}_{n \geq 1}$ is a stationary geometrically ergodic reversible Markov chain with state space $\X$ and invariant distribution $\Pi$. Let $f: \X \to \R$ be a Borel function with $E_\pi (f^8) > 0$.  Consider the batch means estimator $\hat \sigma^2_{\BM, f} = \hat \sigma^2_{\BM, f} (n, a_n, b_n)$ of the MCMC variance $\sigma^2_{f}$ as defined in \eqref{defn_bme}. Let $a_n$ and $b_n$ be such that  $a_n \to \infty$, $b_n \to \infty$ and $\sqrt{a_n}/b_n \to 0$ as $n \to \infty$. Then 
        \[
        \sqrt{a_n} \left(\hat{\sigma}^2_{\BM,f} (n, a_n, b_n)-\sigma^2_f\right) \xrightarrow{d} \N \left(0, 2\sigma^4_f \right)
        \]
        where $\sigma^2_f$ is the MCMC variance as defined in \eqref{defn_mcmcvar}. 
    \end{theorem}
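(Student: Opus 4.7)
The plan is to apply a martingale approximation to the partial sums $\sum_i f(X_i)$ and then reduce the central limit theorem for $\hat{\sigma}^2_{\BM,f}$ to a martingale CLT. Without loss of generality assume $E_\pi f = 0$, since the statistic is translation-invariant. Because $K$ is reversible and geometrically ergodic, $\|K\| < 1$ on $L^2_0(\pi)$, so $I-K$ is boundedly invertible and the Poisson equation $(I-K)g=f$ admits a solution $g\in L^2_0(\pi)$ (given by the convergent Neumann series $g=\sum_{m\ge 0}K^m f$). Define the stationary martingale difference sequence $D_i := g(X_i) - (Kg)(X_{i-1})$ with respect to $\mathcal{F}_i=\sigma(X_1,\ldots,X_i)$; then $E[D_i^2]=\sigma_f^2$ and one has the telescoping identity $f(X_i)=D_{i+1}+g(X_i)-g(X_{i+1})$. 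Summing over the $k$-th batch $B_k$ yields $b_n\bar{Z}_k=M_k+R_k$, where $M_k=\sum_{i\in B_k}D_{i+1}$ is a block-level martingale increment with $E[M_k^2]=b_n\sigma_f^2$, and $R_k$ is a telescoping boundary remainder bounded in $L^2$.

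Next I would peel off the pieces that vanish in the limit. Writing $\hat{\sigma}^2_{\BM,f}=\frac{b_n}{a_n-1}\sum_k\bar{Z}_k^2-\frac{n}{a_n-1}\bar{Z}^2$, the recentering term contributes $\sqrt{a_n}\cdot\frac{n}{a_n-1}\bar{Z}^2=O_p(1/\sqrt{a_n})$ by the MCMC CLT for $\bar Z$. Expanding $b_n\bar{Z}_k^2=(M_k^2+2M_kR_k+R_k^2)/b_n$, stationarity of $R_k$ combined with $\sqrt{a_n}/b_n\to 0$ makes the $R_k^2/b_n$ and $M_kR_k/b_n$ contributions $o_p(1)$ after the $\sqrt{a_n}$ scaling (for the cross term one uses conditional orthogonality of $M_k$ across blocks, not just Cauchy--Schwarz). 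Moreover, expanding $M_k^2=\sum_{i\in B_k}D_{i+1}^2+2\sum_{i<j\in B_k}D_{i+1}D_{j+1}$, the diagonal contribution equals $(b_n\sqrt{a_n})^{-1}\sum_i(D_{i+1}^2-\sigma_f^2)=O_p(1/\sqrt{b_n})=o_p(1)$ by an MCMC CLT for the stationary bivariate functional $(D_{i+1}^2-\sigma_f^2)$. The problem thus reduces to proving
\[
\frac{2}{b_n\sqrt{a_n}}\sum_j T_j\xrightarrow{d}\N\!\left(0,\,2\sigma_f^4\right),\qquad T_j:=D_j\cdot S_{j-1},\quad S_{j-1}:=\sum_{i<j,\,i\in B_{k(j)}}D_i,
\]
where $k(j)$ denotes the batch containing index $j$.

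Since $S_{j-1}$ is $\mathcal{F}_{j-1}$-measurable and $E[D_j\mid\mathcal{F}_{j-1}]=0$, the array $(T_j)$ is itself a martingale difference sequence, and I would close the argument with a martingale CLT in the spirit of \cite{alj:azrak:melard:2014}. Two hypotheses need to be verified. The Lyapunov condition $(a_n b_n^2)^{-2}\sum_j E[T_j^4]\to 0$ follows from Burkholder's inequality $E[S_{j-1}^4]=O(b_n^2)$, the fourth-moment bound $E[D_j^8]<\infty$ (inherited from the assumed 8th-moment hypothesis on $f$ together with operator-theoretic bounds on $g$), and Cauchy--Schwarz, which give $\sum_j E[T_j^4]=O(a_n b_n^3)$, so the normalized quantity is $O(1/n)\to 0$. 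Second, conditional-variance stability requires
\[
\frac{4}{a_n b_n^2}\sum_j\phi(X_{j-1})\,S_{j-1}^2\xrightarrow{p}2\sigma_f^4,\qquad\phi(x):=E\!\left[D_j^2\mid X_{j-1}=x\right],\ E_\pi\phi=\sigma_f^2.
\]

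The main obstacle is this last step, because $S_{j-1}^2$ is non-stationary within a block: it grows linearly in the within-block position and resets at batch boundaries. I would handle it by a two-step averaging argument. First, within a generic block of length $b_n$, show that $b_n^{-2}\sum_{\ell=1}^{b_n}\phi(X_{(k-1)b_n+\ell-1})\,S_{(k-1)b_n+\ell-1}^2$ concentrates around $\tfrac{1}{2}\sigma_f^4$; this corresponds to the Riemann-sum limit $\int_0^1 E_\pi\phi\cdot\sigma_f^2\,t\,dt=\tfrac{1}{2}\sigma_f^4$ and can be justified via a functional invariance principle for the rescaled random walk $b_n^{-1/2}S_{\lfloor b_n t\rfloor}$ combined with ergodic averaging of $\phi$. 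Second, average the above over $k=1,\ldots,a_n$ using the ergodic theorem for the stationary chain. The factor $4\times\tfrac{1}{2}=2$ then produces the announced limiting variance $2\sigma_f^4$.
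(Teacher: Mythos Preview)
Your overall strategy is genuinely different from the paper's. The paper works at the \emph{block} level: it solves a Poisson equation for the subsampled operator $K^{b_n}$ applied to $\tilde h(x)=E[\bar Y_k^2\mid X_{(k-1)b_n}=x]-E_\pi\bar Y_k^2$, and builds a martingale difference array $(\zeta_{k,n})_{2\le k\le a_n}$ of length $a_n-1$ directly from the squared batch means. The limiting variance $2\sigma_f^4$ then drops out of explicit moment identities $E_\pi(b_n^2\bar Y_1^4)\to 3\sigma_f^4$ and $E_\pi(b_n^2\bar Y_1^2\bar Y_2^2)\to\sigma_f^4$. You instead solve the step-level Poisson equation $(I-K)g=f$, decompose each batch sum as $M_k+R_k$, and reduce to a step-level martingale $T_j=D_jS_{j-1}$ of length $n$. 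This is a legitimate route, but it forces you to handle the non-stationary within-block structure of $S_{j-1}$, which the paper's block-level construction never sees: there the conditional variances $E[\xi_{k,n}^2\mid\mathcal F_{k-1,n}]$ are identically distributed in $k$, so an elementary $L^2$ computation (their Proposition~A.1) suffices.

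There is a concrete error in your conditional-variance step, precisely at the point you flag as the main obstacle. You assert that within a single block the quantity $b_n^{-2}\sum_{\ell=1}^{b_n}\phi(X_{(k-1)b_n+\ell-1})S_{(k-1)b_n+\ell-1}^2$ \emph{concentrates} around $\tfrac12\sigma_f^4$, invoking the functional invariance principle. But the FCLT says exactly the opposite: $b_n^{-1}S_{\lfloor b_n t\rfloor}^2\Rightarrow \sigma_f^2 W(t)^2$, so (even taking $\phi\equiv\sigma_f^2$) the within-block quantity converges in distribution to the \emph{random} variable $\sigma_f^4\int_0^1 W(t)^2\,dt$, whose mean is $\tfrac12\sigma_f^4$ but whose variance is strictly positive. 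There is no within-block concentration, and your ``Riemann-sum limit $\int_0^1 E_\pi\phi\cdot\sigma_f^2\,t\,dt$'' replaces $S_\ell^2$ by its expectation $\ell\sigma_f^2$, which the FCLT does not justify. All of the concentration has to come from the second step---averaging over $a_n\to\infty$ weakly dependent blocks---and that is a weak law for a \emph{triangular array} of block functionals, not an ergodic theorem for a fixed functional of the chain. A workable fix is to abandon the FCLT and argue directly in $L^2$: use Burkholder to get $E[S_\ell^4]=O(\ell^2)$, bound block variances and inter-block covariances via the spectral gap, and conclude $\operatorname{Var}\bigl(\tfrac{4}{a_n b_n^2}\sum_j\phi(X_{j-1})S_{j-1}^2\bigr)=O(1/a_n)$. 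As written, however, step one of your two-step argument fails.
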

    
    \begin{remark}[\textbf{Proof technique}] \label{remark_proof_technique}
    	Our proof is based on an operator theoretic approach, and relies on a careful manipulation of appropriate moments, and the martinagle CLT. Previous work in \citep{chien:1997, damerdji1991strong, flegal:jones:2010} on consistency of $\hat{\sigma}^2_{\BM, f}$ is based on a Brownian motion based approximation (see \cite[Equation~??]{flegal:jones:2010}). This leads to some differences in the assumptions that are required to prove the respective results. Note again that \citep{chien:1997, damerdji1991strong, flegal:jones:2010} do not explore a CLT for the batch means estimator. 
    \end{remark}

    \begin{remark}[\textbf{Discussion of assumptions: Uniform vs. Geometric ergodicity, reversibility and moments}] \label{remark_uniform_ergo}
    	Our results require geometric ergodicity of the Markov chain, which in general is required to guarantee CLT of the MCMC estimate $\bar f_n$ itself.  The consistency of $\hat \sigma_{\BM, f}^2$ in \cite{chien:1997} and \citep{damerdji1991strong} have been proved under uniform ergodicity of the Markov chain, which is substantially more restrictive and difficult to justify in practice.  On the other hand, \cite{flegal:jones:2010} consider a Brownian motion based approach to prove their result. The consistency result in \cite{flegal:jones:2010} holds under geometric ergodicity, however, verifying a crucial Brownian motion based sufficient condition can be challenging when the chain is not uniformly ergodic.  
    	
    	On the other hand, we require reversibility of the  Markov chain which is not a requirement in \citep{chien:1997, damerdji1991strong, flegal:jones:2010}. Note that the commonly used Metropolis-Hastings algorithm and its modern efficient extension, the Hamiltonian Monte Carlo algorithm, are necessarily reversible \citep{geyer:1992, neal2011mcmc}. Also, for any Gibbs sampler, a reversible counterpart can always be constructed through random scans or reversible fixed scans \citep{besag1986statistical, geyer:1992}, and a two-block Gibbs sampler is \emph{always} reversible. 
    	
    	We require the function $f$ to have a finite eighth moment, while the consistency results in \cite{damerdji1991strong} assume the existence of twelfth moment and those in \cite{flegal:jones:2010} assume moments of order $4 + \delta + \epsilon$ for some $\delta > 0$ and $\epsilon > 0$. Note again that the authors in \citep{flegal:jones:2010} do not establish a CLT. 
    \end{remark}

    \begin{remark}[\textbf{Stationarity}]  \label{remark_burnin}
        It is to be noted that Theorem~\ref{thm_clt_bme} assumes stationarity, i.e., the initial measure of the Markov chain is assumed to be the stationary measure. This is similar to the assumptions made in \citep{damerdji1991strong, chien:1988} for establishing consistency. A moderate burn-in or warm-up period for an MCMC algorithm is usually enough to guarantee stationarity in practice.
    \end{remark}

	\begin{remark}[\textbf{Choice of $a_n$ and $b_n$}] \label{remark_an_bn_choice}
		Consider the two practically recommended choices \citep{flegal:haran:jones:2008} (i) $a_n = b_n = \sqrt{n}$ and (ii) $\sqrt{a_n} = b_n = n^{1/3}$ as mentioned in the Introduction. Clearly, (i) satisfies the sufficient conditions on $a_n$ and $b_n$ described in Theorem~\ref{thm_clt_bme} and hence, batch means estimators based on this choice attains a CLT, provided the other conditions in Theorem~\ref{thm_clt_bme} hold. On the other hand, (ii) does not satisfy the conditions in Theorem~\ref{thm_clt_bme}, and hence a CLT is not guaranteed with this choice. Small adjustments, such as $a_n = n^{ -\delta + 2/3}, b_n = n^{\delta + 1/3}$ for some small $0 < \delta < 2/3$, and $a_n = n^{2/3} (\log n)^{-\delta}$ and $b_n = n^{1/3} (\log n)^{\delta}$ for some (small) $\delta > 0$, could be used to technically satisfy the sufficient condition, however, the resulting convergence in distribution may be slow (see the toy example in Section~\ref{sec_illus_toy}). 
	\end{remark}

    Before proving Theorem~\ref{thm_clt_bme}, we first introduce some notation, and then state and prove some intermediate results. Suppose the Markov chain $(X_n)_{n \geq 1}$ and the function $f$ satisfy the assumptions made in Theorem~\ref{thm_clt_bme}. Define $Y_i = f(X_i) - E_\pi f$ for $i = 1, 2, \dots$, and write the batch-means estimator $\hat\sigma_{\BM, f}^2$ in \eqref{defn_bme} as
    \begin{equation*}
    \hat{\sigma}^2_{\BM,f} = \hat{\sigma}^2_{\BM,f}(n, a_n, b_n) =  \frac{b_n}{a_n-1}\sum_{k=1}^{a_n}\left(\bar{Y}_k-\bar{Y}\right)^2 =\frac{a_n}{a_n-1} \left(\frac{b_n}{a_n}\sum_{k=1}^{a_n}{\bar{Y}_k}^2-b_n\bar{Y}^2 \right).
    \end{equation*}
    Here $\bar{Y}_k := b_n^{-1} \sum_{i=1}^{b_n} Y_{(k-1)b_n+i}$, and $\bar{Y} := a_n^{-1} \sum_{i=1}^{a_n} \bar{Y}_k$. We shall consider the related quantity 
    \begin{equation}\label{defn_mod_bme}
    \tilde{\sigma}^2_{\BM,f} := \frac{b_n}{a_n}\sum_{k=1}^{a_n}{\bar{Y}_k}^2 - b_n\bar{Y}^2 = \left(\frac{a_n-1}{a_n}\right) \hat \sigma_{\BM, f}^2
    \end{equation}
    and call it the \emph{modified batch means estimator}. The following two lemmas establish two asymptotic results on the modified batch means estimator. The first lemma proves asymptotic normality for the modified batch means estimator (with a shift) whenever $a_n 
\to \infty$ and $b_n \to \infty$. Key propositions needed in the proof of this lemma are provided in the Appendix.

    \begin{lemma} \label{LEMMA_2}
    	Consider the modified batch means estimator $\tilde \sigma^2_{\BM, f}$ as defined in \eqref{defn_mod_bme}. If $a_n \to \infty$ and $b_n \to \infty$ as $n \to \infty$, then 
    	\[
    	\sqrt{a_n} \left[\left(\tilde{\sigma}^2_{\BM, f}  - \sigma_f^2\right) - \left(E_\pi(\tilde{\sigma}^2_{\BM, f}) - E_\pi \left(b_n \bar{Y}^2\right) - \sigma_f^2\right) \right] \longrightarrow \N(0,2\sigma^4_f)
    	\]
    	where $\sigma_f^2$ is the MCMC variance as defined in \eqref{defn_mcmcvar}.
    \end{lemma}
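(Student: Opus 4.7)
The plan is to simplify the centered quantity algebraically, discard asymptotically negligible pieces, and apply a martingale central limit theorem to the surviving triangular-array sum. Let $W_k := b_n \bar Y_k^2$, so $\tilde \sigma^2_{\BM,f} = a_n^{-1}\sum_{k=1}^{a_n} W_k - b_n \bar Y^2$ and, by stationarity, $E_\pi(\tilde \sigma^2_{\BM,f}) = E_\pi W_1 - E_\pi(b_n \bar Y^2)$. Substituting these into the quantity appearing in the lemma yields
\[
\tilde \sigma^2_{\BM,f} - E_\pi(\tilde \sigma^2_{\BM,f}) + E_\pi(b_n \bar Y^2) = \frac{1}{a_n}\sum_{k=1}^{a_n}(W_k - E_\pi W_k) - \bigl(b_n \bar Y^2 - E_\pi(b_n \bar Y^2)\bigr) + E_\pi(b_n \bar Y^2),
\]
so multiplication by $\sqrt{a_n}$ gives $S_n - R_n + T_n$, with $S_n := a_n^{-1/2}\sum_{k=1}^{a_n}(W_k - E_\pi W_k)$, $R_n := \sqrt{a_n}\bigl(b_n \bar Y^2 - E_\pi(b_n \bar Y^2)\bigr)$, and $T_n := \sqrt{a_n}\,E_\pi(b_n \bar Y^2)$.

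The deterministic term $T_n$ and the remainder $R_n$ are both negligible. The MCMC CLT (which holds under our hypotheses) gives $\sqrt{n}\,\bar Y = O_p(1)$, so $b_n \bar Y^2 = O_p(b_n/n) = O_p(1/a_n)$ and $E_\pi(b_n \bar Y^2) = a_n^{-1}\cdot n\,\var_\pi(\bar Y) = O(1/a_n)$ since $n\,\var_\pi(\bar Y)\to\sigma_f^2$. Hence $R_n = O_p(1/\sqrt{a_n})$ and $T_n = O(1/\sqrt{a_n})$, so both are $o_p(1)$. It therefore suffices to prove $S_n \xrightarrow{d} N(0, 2\sigma_f^4)$.

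For this, I would recast $S_n$ through a martingale decomposition. Let $\mathcal F_k := \sigma(X_1, \ldots, X_{k b_n})$ and set $U_k := W_k - E[W_k \mid \mathcal F_{k-1}]$, so $\{U_k\}_{k=1}^{a_n}$ are $\mathcal F_k$-martingale differences and
\[
S_n \;=\; \frac{1}{\sqrt{a_n}}\sum_{k=1}^{a_n} U_k \;+\; \frac{1}{\sqrt{a_n}}\sum_{k=1}^{a_n}\bigl(E[W_k \mid \mathcal F_{k-1}] - E_\pi W_k\bigr).
\]
By the Markov property, $E[W_k \mid \mathcal F_{k-1}] = h_{b_n}(X_{(k-1)b_n})$ for a single function $h_{b_n}$, and the second sum is an ergodic average of the centered function $h_{b_n} - E_\pi h_{b_n}$ along the sub-chain $\{X_{(k-1)b_n}\}$, whose transition operator is $K^{b_n}$. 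Reversibility plus geometric ergodicity gives $\|K\|<1$ by the \citet{roberts:rosenthal:1997} result cited above, so $\|K^{b_n}\|\le\|K\|^{b_n}\to 0$; combined with $\var_\pi(h_{b_n}) \to 0$ (because $h_{b_n}(x)\to\sigma_f^2$ for every $x$, by the chain forgetting its initial state), this forces the second sum to be $o_p(1)$. The martingale piece is then handled by the MCLT of \citet{alj:azrak:melard:2014}, which reduces matters to verifying (a) $a_n^{-1}\sum_{k=1}^{a_n} E[U_k^2 \mid \mathcal F_{k-1}] \xrightarrow{p} 2\sigma_f^4$ and (b) a conditional Lindeberg condition on the $\{U_k\}$.

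The main obstacle is verifying (a). The conditional variance $E[U_k^2 \mid \mathcal F_{k-1}] = E[W_k^2 \mid X_{(k-1)b_n}] - h_{b_n}(X_{(k-1)b_n})^2$ has target limit $2\sigma_f^4$, which is exactly the fourth cumulant of $N(0, \sigma_f^2)$: one must show that $E[(\sqrt{b_n}\bar Y_k)^4 \mid X_{(k-1)b_n}=x]\to 3\sigma_f^4$ uniformly enough in $x$ for the ergodic average along the sub-chain to concentrate at $2\sigma_f^4$. This is in effect a fourth-moment refinement of the batch-mean CLT, and is where both reversibility (via the operator norm inequalities $\|K^m\|\le\|K\|^m$) and the eighth-moment hypothesis on $f$ (which supplies the uniform integrability needed to lift the distributional convergence $\sqrt{b_n}\bar Y_k \Rightarrow N(0,\sigma_f^2)$ to convergence of its fourth moment) are essential; I expect these quantitative moment and covariance estimates to be precisely the propositions deferred to the Appendix. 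Once (a) is established, the conditional Lindeberg condition (b) follows from uniform boundedness of $E[W_k^{2+\eta}]$ for some $\eta>0$, again a direct consequence of $E_\pi f^8 < \infty$.
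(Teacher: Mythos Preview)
Your overall strategy is the paper's: reduce to $S_n$, split into a martingale plus a remainder, and invoke the martingale CLT of \citet{alj:azrak:melard:2014}; your handling of $R_n$, $T_n$ and your identification of conditions (a)--(b) as the technical core deferred to the Appendix are all in line with the paper. The substantive divergence, and the gap, is in how you dispose of the remainder $a_n^{-1/2}\sum_k\bigl(h_{b_n}(X_{(k-1)b_n}) - E_\pi h_{b_n}\bigr)$. You claim $\var_\pi(h_{b_n})\to 0$ ``because $h_{b_n}(x)\to\sigma_f^2$ for every $x$'', but pointwise convergence does not give $L^2(\pi)$ convergence, and the obvious Jensen bound $\var_\pi(h_{b_n}) \le E_\pi(b_n^2\bar Y_k^4)$ yields only boundedness (Proposition~\ref{prop_Y1bar4}). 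The claim is in fact true --- one can expand $b_n\tilde h$ as $b_n^{-1}\sum_i K^i(f_0^2-\mu_2) + 2b_n^{-1}\sum_{r}\sum_i K^i(f_0\cdot K^r f_0-\gamma_r)$ and show $\|f_0\cdot K^r f_0-\gamma_r\|_\pi\to 0$ by truncation --- but that is genuine work your sketch omits. Without it, your martingale $a_n^{-1/2}\sum U_k$ has per-term second moment $E_\pi(b_n^2\bar Y_k^4) - E_\pi(h_{b_n}^2) \to 2\sigma_f^4 - \lim\var_\pi(h_{b_n})$, so the MCLT would deliver the wrong variance.

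The paper avoids this entirely. Instead of killing the remainder, it solves the Poisson equation $\tilde h = (I-K^{b_n})\tilde g$ and telescopes, thereby \emph{absorbing} the remainder back into the martingale (the augmented increment $\zeta_{k,n}$) and leaving only the boundary term $T_2 = b_n a_n^{-1/2}\bigl(\tilde g(X_{b_n})-\tilde g(X_{a_nb_n})\bigr)$. Controlling $T_2$ needs only $\|b_n\tilde g\|_\pi = O(1)$, which follows from $\|(I-K^{b_n})^{-1}\|\le(1-\lambda^{b_n})^{-1}$ and the mere \emph{boundedness} of $\|b_n\tilde h\|_\pi$ --- not its vanishing. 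The extra $\tilde g$ piece of the martingale then contributes nothing to the asymptotic variance (Proposition~\ref{prop_V_n}), again by operator-norm bounds. So the Poisson-equation route trades your delicate $L^2$-convergence step for routine spectral-gap estimates.
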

    
    \begin{proof}
    	First observe that
    	\begin{align*} \label{simplify_lemma2_1}
    	& \quad \sqrt{a_n}  \left[\left(\tilde{\sigma}^2_{\BM, f}  - \sigma_f^2\right) - \left(E_\pi(\tilde{\sigma}^2_{\BM, f}) - E_\pi \left(b_n \bar{Y}^2\right) - \sigma_f^2\right) \right] \\
    	&= \sqrt{a_n} \left(\tilde{\sigma}^2_{\BM, f} - E_\pi(\tilde{\sigma}^2_{\BM, f}) - E_\pi \left(b_n \bar{Y}^2\right) \right) \\
    	&= \sqrt{a_n} \left(\frac{b_n}{a_n} \sum_{k=1}^{a_n} \bar{Y}_k^2 - b_n \bar{Y}^2 - \frac{b_n}{a_n} \sum_{k=1}^{a_n} E_\pi\left(\bar{Y}_k^2\right) \right)\\
    	&= \frac{b_n}{\sqrt{a_n}} \sum_{k=1}^{a_n} \left(\bar{Y}_k^2 - E_\pi \left(\bar{Y}_k^2\right) \right) - \sqrt{a_n} \ b_n \bar{Y}^2\\
    	&= \frac{b_n}{a_n} \sum_{k=2}^{a_n} \left\{\bar{Y}_k^2 - E\left(\bar{Y}_k^2 \mid \mathcal{F}_{k-1}\right) + E\left(\bar{Y}_k^2 \mid \mathcal{F}_{k-1}\right) - E_\pi\left(\bar{Y}_k^2\right) \right\} \\ 
    	& \qquad \qquad + \frac{b_n}{\sqrt{a_n}} \left(\bar{Y_1}^2 - E_\pi\left(\bar{Y_1}^2\right)\right) - \sqrt{a_n}\ b_n\bar{Y}^2\\
    	&= \frac{b_n}{a_n} \sum_{k=2}^{a_n} \left\{\bar{Y}_k^2 - E\left(\bar{Y}_k^2 \mid \mathcal{F}_{k-1}\right) + h\left(X_{(k-1)b_n}\right) - E_\pi h\left(X_{(k-1)b_n}\right) \right\} \\
    	& \qquad \qquad + \frac{b_n}{\sqrt{a_n}} \left(\bar{Y_1}^2-E_\pi\left(\bar{Y_1}^2\right) \right) - \sqrt{a_n}\ b_n \bar{Y}^2. \numbereqn
    	\end{align*}
    	Here, for  $1 \leq k \leq a_n$, 
    	$\mathcal{F}_{k, n}$ is the sigma-algebra generated by $X_1, \dots, X_{kb_n}$, and
    	\[
    	h(X_{(k-1)b_n}) := E\left(\bar{Y}_k^2 \mid \mathcal{F}_{k, n} \right) = E \left(\bar{Y}_k^2 \mid X_{(k-1)b_n} \right)
    	\] 
    	due to the Markovian structure of $\mcx$. Let $\tilde{h} = h - E_\pi h \in \mathcal{L}_0^2(\pi)$. Since the Markov operator $K$ has operator norm $\lambda = \|K\| < 1$ (due to geometric ergodicity), it follows  that $I - K$ is invertible (using, e.g., the expansion $(I - K)^{-1} = \sum_{j=0}^\infty K^j$). Therefore, $I - K^{b_n}$ is also invertible, since $K^{b_n}$ is also a Markov operator. Consequently, one can find a $\tilde g$ such that $\tilde g = (I- K^{b_n})^{-1} \tilde h$, i.e., $\tilde{h} = \tilde{g} - K^{b_n} \tilde{g}$. Then
    	\begin{align*}
    	h\left(X_{(k-1)b_n}\right) - E_\pi h(X_{(k-1)b_n}) 
    	&= \tilde{h} \left(X_{(k-1)b_n}\right) \\
    	&= \tilde{g}\left(X_{(k-1)b_n}\right) - K^{b_n} \tilde{g}\left(X_{(k-1)b_n}\right)\\
    	&= \left[ \tilde{g}\left(X_{(k-1)b_n}\right) - \tilde{g}\left(X_{kb_n}\right)\right] + \left[\tilde{g}\left(X_{kb_n}\right) - K^{b_n} \tilde{g}\left(X_{(k-1)b_n}\right)\right]
    	\end{align*}
    	
    	\noindent Hence 
    	\begin{align*}
    	&\quad \sum_{k=2}^{a_n} \left[ h\left(X_{(k-1)b_n}\right) - E_\pi h\left(X_{(k-1)b_n}\right) \right] \\
    	&= \sum_{k=2}^{a_n} \left[ \tilde{g} \left(X_{kb_n}\right) - E\left(\tilde{g}(X_{kb_n}) \mid \left(X_{(k-1)b_n}\right) \right) \right] + \tilde{g}\left(X_{b_n}\right) - \tilde{g}\left(X_{a_nb_n}\right)
    	\end{align*}
    	so that from \eqref{simplify_lemma2_1}, 
    	\begin{align*} \label{simplify_lemma2_2}
    	&\quad\sqrt{a_n} \left(\tilde{\sigma}^2_{\BM, f} - E_\pi(\tilde{\sigma}^2_{\BM, f}) - E_\pi \left(b_n \bar{Y}^2\right) \right) \\
    	&= \frac{b_n}{\sqrt{a_n}}\sum_{k=2}^{a_n}\left[\bar{Y}_k^2 - E\left(\bar{Y}_k^2 \mid X_{(k-1)b_n}\right) + \tilde{g}\left(X_{kb_n}\right) - E\left(\tilde{g}(X_{kb_n}) \mid X_{(k-1)b_n}\right) \right] \\
    	& \qquad + \frac{b_n}{\sqrt{a_n}} \left(\tilde{g}\left(X_{kb_n}\right) - \tilde{g}\left(X_{a_n b_n}\right) \right)  + \frac{b_n}{\sqrt{a_n}}\left(\bar{Y}_1^2 - E_\pi\left(\bar{Y}_1^2\right)\right) - \sqrt{a_n}\ b_n\bar{Y}^2 \\
    	&= T_1 + T_2 + T_3 - T_4, \text{ say.} \numbereqn
    	\end{align*}
    	
    	\noindent We shall note the convergences of the terms $T_1$, $T_2$, $T_3$ and $T_4$ separately. From Markov chain CLT, we have $\sqrt{n}\ \bar{Y} = \sqrt{a_n b_n}\ \bar{Y} \xrightarrow{d} \N(0, \sigma_f^2)$.
    	Therefore, $(\sqrt{n}\ \bar{Y})^2 = a_n b_n \bar{Y}^2 = \mathcal{O}_P(1)$, which means, 
    	\[
    	T_4 = \sqrt{a_n}\ b_n \bar{Y}^2 = \frac{1}{\sqrt{a_n}} \cdot a_n b_n \bar{Y}^2 = o_P(1).
    	\]
    	Again, for all $1 \leq k \leq a_n$, 
    	\begin{align*}
    	\left\| b_n\  \tilde{g}(X_{kb_n})\right\|^2_\pi &= b_n^2\ \left\|(I- K^{b_n})^{-1} \tilde{h}(X_{kb_n}) \right\|_\pi^2 \\
    	&= b_n^2 \left\|\left(\sum_{j=0}^\infty K^{b_n j}\right) \tilde{h}(X_{kb_n})\right\|^2_\pi \\
    	&\leq b_n^2 \left(\sum_{j=0}^\infty \|K\|^{b_n j}\right)^2 \left\|\tilde{h}(X_{kb_n})\right\|_\pi^2 \\
    	&= \left(\frac{1}{1-\lambda^{b_n}}\right)^{2} \var_\pi E \left(b_n\ \bar{Y}_k^2 \mid X_{(k-1)b_n} \right)   \\
    	&\leq \left(\frac{1}{1-\lambda^{b_n}}\right)^{2}  E_\pi \left(b_n^2\ \bar{Y}_k^4  \right) \to 3  \sigma_f^4 
    	\end{align*}
    	since $\sum_{j=0}^\infty \|K\|^{b_n j} = (1 - \lambda^{b_n})^{-1} \to 1$ as $\lambda = \|K\| \in (0, 1)$ and $E(b_n^2\ \bar{Y}_k^4) \to 3 \sigma_f^4$ from Proposition~\ref{prop_Y1bar4}.   Consequently, $b_n \tilde g(X_{kb_n}) = \mathcal{O}_p(1)$ and hence
    	\[
    	T_2 = \frac{b_n}{\sqrt{a_n}} \left(\tilde{g}\left(X_{kb_n}\right) - \tilde{g}\left(X_{a_n b_n}\right) \right) = o_P(1).
    	\]  
    	Again using the Markov chain CLT for $\bar Y_1$, it follows that 
    	\[
    	T_3 = \frac{b_n}{\sqrt{a_n}}\left(\bar{Y}_1^2 - E_\pi\left(\bar{Y}_1^2\right)\right) = o_P(1).
    	\]
    	
    	\noindent Finally, note that the terms inside the summation sign in $T_1$, i.e.,
    	\[
    	\zeta_{k,n} = \bar{Y}_k^2 - E\left(\bar{Y}_k^2 \mid X_{(k-1)b_n}\right) + \tilde{g}\left(X_{kb_n}\right) - E\left(\tilde{g}(X_{kb_n}) \mid X_{(k-1)b_n}\right)
    	\] 
    	forms a martingale difference sequence (MDS), for $k \geq 2$. Let 
    	\begin{equation} \label{xi_defn}
    	\xi_{k, n} = \zeta_{k,n}/\sqrt{E_\pi (\zeta_{k,n}^2)}
    	\end{equation} 
    	Of course $E_\pi \xi_{k-1, n} = 0$, $\var_\pi \xi_{k, n} = 1$ and  $E_\pi\left|\xi_{k,n}\right|^{2+\delta} < \infty$, e.g., for $\delta = 1$ as $E_\pi (f^{8}) < \infty$, by assumption. Then,  for each $n\geq 1$,  $\left(\xi_{k, n}\right)_{k\geq 2}$ is a  mean $0$ and variance $1$ MDS with $(a_n - 1)^{-1} \sum_{k=2}^{a_n} E (\xi_{k,n}^2 \mid \mathcal{F}_{k, n}) \xrightarrow{P} 0$ (Proposition~\ref{prop_Exi2givenF} in Appendix~\ref{sec_appen_proof}). Therefore,
    	\[
    	\frac{1}{\sqrt{a_n-1}} \sum_{k=2}^{a_n}\xi_{k,n} \xrightarrow{d} \N(0,1)
    	\]
    	as $n\to\infty$, by the Lyapunov CLT for MDS \citep[Theorem~1.3]{alj:azrak:melard:2014}. Hence,
    	\[
    	T_1 = \frac{b_n}{\sqrt{a_n}} \sum_{k=2}^{a_n} \zeta_{k,n} =  \frac{b_n}{\sqrt{a_n}} \sum_{k=2}^{a_n} \tau_n \xi_{k,n} = b_n\tau_n \frac{1}{\sqrt{a_n}} \sum_{k=2}^{a_n} \xi_{k,n} \xrightarrow{d} \N(0,c^2)
    	\]  
    	as long as $b_n^2 \tau_n^2 \to c^2$ as $n\to\infty$ for some $c>0$, where $\tau_n^2 = E_\pi(\zeta_{k,n}^2)$. 
    	Now,
    	\[
    	b_n^2\tau_n^2 = E_\pi 
    	\left[ b_n \bar{Y}_1^2 - E\left(b_n\bar{Y}_1^2 \mid X_{0}\right) + b_n \tilde{g}\left(X_{b_n}\right) -  E\left(b_n \tilde{g}(X_{b_n}) \mid X_{0}\right) \right]^2=E_\pi\left[U_n+V_n\right]^2
    	\]
    	where 
    	\begin{equation} \label{defn_U_n}
    	U_n = b_n \bar{Y}_1^2 - E\left(b_n\bar{Y}_1^2 \mid X_{0}\right) + b_n\tilde{h}\left(X_{b_n}\right)
    	\end{equation} 
    	and
    	\begin{equation} \label{defn_V_n} 
    	V_n = b_n \tilde{g}\left(X_{b_n}\right) - b_n\tilde{h}\left(X_{b_n}\right) -  E\left(b_n \tilde{g}(X_{b_n}) \mid X_{0}\right). 
    	\end{equation}
    	From Propositions~\ref{prop_U_n} and \ref{prop_V_n} in Appendix~\ref{sec_appen_proof}, it follows that 
    	$E_\pi (U_n^2) \to 2\sigma_f^4$ and $E_\pi (V_n^2) \to 0$ as $n \to \infty$, where $\sigma_f^2$ is the MCMC variance \eqref{defn_mcmcvar}. Therefore, by Schwarz's inequality, $0 \leq \{E_\pi (U_n V_n)\}^2 \leq E_\pi (U_n^2) E_\pi(V_n^2) \to 0$, i.e., $E(U_n V_n) \to 0$ and hence  
    	\[
    	b_n^2 \tau_n^2 = E_\pi(U_n + V_n)^2 =  E_\pi (U_n^2 + V_n^2 + 2 U_n V_n) \to 2\sigma_f^4.
    	\] 
    	Consequently, $T_1 \xrightarrow{d} \N(0, 2\sigma_f^4)$. Using this in \eqref{simplify_lemma2_2}, together with the fact that each of $T_2$, $T_3$ and $T_4$ is $o_P(1)$, completes the proof.
    \end{proof}
    
     We now state and prove our second lemma. This lemma shows that the shift in  Lemma~\ref{LEMMA_2} is asymptotically negligible if  $a_n$ is of an order smaller  than  $n^{1/3}$. On the other hand, if $a_n$ is of a larger order than $n^{1/3}$, and $K$ is a positive operator ($\langle g, Kg \rangle_\pi \geq 0$ for all $g \in L_0^2 (\pi)$), then the shift diverges to infinity asymptotically.

    \begin{lemma} \label{LEMMA_1}
        Consider the modified batch means estimator $\tilde \sigma^2_{\BM, f}$ as defined in \eqref{defn_mod_bme}. 
        As $n \to \infty$, we have, 
        \begin{enumerate} [label=(\roman*)]
            \item \label{lemma1_zero} $\displaystyle \sqrt{a_n} \left| E_\pi\left(\tilde{\sigma}^2_{\BM,f}\right) + E_\pi\left(b_n\bar{Y}^2\right) - \sigma^2_f \right| \to 0$ if $\sqrt{a_n}/b_n \to 0$, 
            \item \label{lemma1_infty} in addition, if the Markov operator $K$ associated with $\mcx$ is positive, self-adjoint, and $K(f - E_\pi f) \not\equiv 0$, then  $\displaystyle \sqrt{a_n} \left| E_\pi\left(\tilde{\sigma}^2_{\BM,f}\right) + E_\pi\left(b_n\bar{Y}^2\right) - \sigma^2_f \right| \to \infty$ if $\sqrt{a_n}/b_n \to \infty$.
        \end{enumerate}
    \end{lemma}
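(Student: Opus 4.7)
The plan is to reduce the shift $E_\pi(\tilde{\sigma}^2_{\BM,f}) + E_\pi(b_n \bar Y^2) - \sigma_f^2$ to a purely deterministic series in the stationary autocovariances, and then read off both conclusions from the known decay (and, for (ii), positivity) of those autocovariances. By stationarity, $E_\pi(\bar Y_k^2) = E_\pi(\bar Y_1^2)$ for every $k$, so
\[
E_\pi(\tilde{\sigma}^2_{\BM,f}) + E_\pi(b_n \bar Y^2) = \frac{b_n}{a_n}\sum_{k=1}^{a_n} E_\pi(\bar Y_k^2) = b_n E_\pi(\bar Y_1^2).
\]
Writing $\gamma_\ell := \cov_\pi(f(X_1), f(X_{\ell+1}))$ for $\ell \geq 0$ and expanding the variance of the partial sum, I would obtain $b_n E_\pi(\bar Y_1^2) = b_n^{-1}\var_\pi(\sum_{i=1}^{b_n} Y_i) = \gamma_0 + 2\sum_{\ell=1}^{b_n-1}(1-\ell/b_n)\gamma_\ell$. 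Comparing with $\sigma_f^2 = \gamma_0 + 2\sum_{\ell=1}^\infty \gamma_\ell$ yields the clean identity
\[
\sigma_f^2 - b_n E_\pi(\bar Y_1^2) = \frac{2}{b_n}\sum_{\ell=1}^{b_n-1}\ell\,\gamma_\ell + 2\sum_{\ell=b_n}^\infty \gamma_\ell,
\]
which is the only object both parts need to control.

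For \textbf{part (i)}, I would invoke the Roberts--Rosenthal fact cited in the excerpt: for reversible, geometrically ergodic chains, $\lambda := \|K\| < 1$. Setting $g = f - E_\pi f$, the Markov property and self-adjointness of $K$ give $\gamma_\ell = \langle g, K^\ell g\rangle_\pi$, so by Cauchy--Schwarz $|\gamma_\ell| \leq \|K^\ell g\|_\pi \|g\|_\pi \leq \lambda^\ell \|g\|_\pi^2$. Plugging this geometric bound into the displayed identity, the tail sum is at most $2\|g\|_\pi^2 \lambda^{b_n}/(1-\lambda)$ and the linearly weighted sum is at most $(2\|g\|_\pi^2/b_n)\sum_{\ell=1}^\infty \ell \lambda^\ell = O(1/b_n)$. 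Hence $|\sigma_f^2 - b_n E_\pi(\bar Y_1^2)| = O(1/b_n)$, and multiplying by $\sqrt{a_n}$ gives $O(\sqrt{a_n}/b_n) \to 0$ under the hypothesis $\sqrt{a_n}/b_n \to 0$.

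For \textbf{part (ii)}, positivity of the self-adjoint $K$ allows me to form its positive square root and write $\gamma_\ell = \langle g, K^\ell g\rangle_\pi = \|K^{\ell/2}g\|_\pi^2 \geq 0$ for every $\ell \geq 0$. The hypothesis $Kg \not\equiv 0$ combined with positivity forces $K^{1/2}g \neq 0$ (otherwise $Kg = K^{1/2}(K^{1/2}g) = 0$), so $\gamma_1 = \|K^{1/2}g\|_\pi^2 > 0$. Since every summand on the right-hand side of the displayed identity is then nonnegative, I can drop everything except the $\ell = 1$ term of the first sum to obtain the lower bound $\sigma_f^2 - b_n E_\pi(\bar Y_1^2) \geq 2\gamma_1/b_n$, so that $\sqrt{a_n}\,|\sigma_f^2 - b_n E_\pi(\bar Y_1^2)| \geq 2\gamma_1 \sqrt{a_n}/b_n \to \infty$ whenever $\sqrt{a_n}/b_n \to \infty$.

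The essential content is concentrated in the opening reduction and the geometric bound $|\gamma_\ell| \leq \lambda^\ell \|g\|_\pi^2$; the remaining work is elementary estimation of geometric series. The only subtle step I expect to have to justify carefully is the assertion $K^{1/2}g \neq 0$ in part (ii), which relies on the spectral theorem to make sense of the positive square root of the self-adjoint operator $K$ on $L^2_0(\pi)$.
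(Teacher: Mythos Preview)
Your proposal is correct and follows essentially the same route as the paper: the same reduction to $b_n E_\pi(\bar Y_1^2)-\sigma_f^2$, the same autocovariance identity, and the same geometric bound $|\gamma_\ell|\le \lambda^\ell\|g\|_\pi^2$ for part~(i). The only minor difference is in part~(ii): the paper lower-bounds by the $\ell=2$ term, observing directly from self-adjointness that $\gamma_2=\langle Kg,Kg\rangle_\pi=\|Kg\|_\pi^2>0$, which sidesteps the square-root construction you flagged as subtle.
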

    
    \begin{proof}
        
        On the outset, note that	
        \begin{align*}
        \sqrt{a_n} \left(E_\pi\left(\tilde{\sigma}^2_{\BM,f}\right) + E_\pi\left(b_n\bar{Y}^2\right) - \sigma^2_f\right)
        &= \sqrt{a_n}\left(E_\pi\left[\frac{b_n}{a_n}\sum_{k=1}^n\bar{Y}_k^2\right]-\sigma^2_f\right) \\
        &= \sqrt{a_n}\left(b_n E_\pi \left(\bar{Y_1}^2 \right)-\sigma^2_f\right). \numbereqn \label{simplify_lemma1_1}
        \end{align*}
        where $\sigma_f^2$ is  the MCMC variance defined in \eqref{defn_mcmcvar}. Now
        \[
        b_n E_\pi \left(\bar{Y_1}^2 \right) = \frac{1}{b_n} \: E_\pi(Y_1 + Y_2 + \dots + Y_{b_n})^2 = \frac{1}{b_n} \left(b_n\gamma_0 + 2\sum_{k=1}^{b_n-1}(b_n - k) \gamma_k\right)
        \]
        and from \eqref{defn_mcmcvar}, $\sigma_f^2 = \gamma_0 + 2 \sum_{k=1}^\infty \gamma_k$ where for any $h \geq 0$, $\gamma_h$ denotes the auto-covariance
        \begin{equation} \label{defn_gamma_h}
        \gamma_h = \cov_\pi(Y_1,Y_{1+h}) = E_\pi(Y_1 Y_{1+h}) =  E_\pi [Y_1 E(Y_{1+h} \mid X_1)] = \langle f_0, K^{h} f_0 \rangle.
        \end{equation}
        Here $f_0 = f - E_\pi f \in L^2_0(\pi)$, $K^0 \equiv I$ (the identity operator), and $K^h$ for $h \geq 1$ denotes the operator associated with the $h$-step Markov transition function. Therefore, from \eqref{simplify_lemma1_1}, it follows that
        \begin{align*} \label{simplify_lemma1_2}
        &\quad \sqrt{a_n}\left| E_\pi\left(\tilde{\sigma}^2_{\BM,f}\right) + E_\pi\left(b_n\bar{Y}^2\right) - \sigma^2_f \right| \\ 
        &= \sqrt{a_n}\left| \frac{1}{b_n} \left(b_n\gamma_0 + 2\sum_{k=1}^{b_n-1}(b_n - k) \gamma_k\right) -
        \left(\gamma_0 + 2 \sum_{k=1}^\infty \gamma_k\right) \right| \\
        &=  \frac{\sqrt{a_n}}{b_n}  \left| b_n\gamma_0 + 2\sum_{k=1}^{b_n-1}(b_n - k) \gamma_k  -
        b_n\gamma_0 - 2 b_n \sum_{k=1}^\infty \gamma_k \right| \\
        &= \frac{\sqrt{a_n}}{b_n} \left|- 2 \sum_{k=1}^{b_n-1} k \gamma_k - 2 b_n \sum_{k=b_n}^\infty \gamma_k \right|
        = \frac{2\sqrt{a_n}}{b_n} \left|\sum_{k=1}^{b_n-1} k \gamma_k + b_n \sum_{k=b_n}^\infty \gamma_k \right| \numbereqn.
        \end{align*}
        
        \noindent Using triangle inequality on the right hand side of \eqref{simplify_lemma1_2}, we get
        \begin{align*} 
        \sqrt{a_n}\left| E_\pi\left(\tilde{\sigma}^2_{\BM,f}\right) + E_\pi\left(b_n\bar{Y}^2\right) - \sigma^2_f \right| 
        &\leq \frac{2\sqrt{a_n}}{b_n} \left(\sum_{k=1}^{b_n-1} k |\gamma_k| + b_n \sum_{k=b_n}^\infty |\gamma_k| \right) \\
        & \stackrel{(\star)}{\leq} \frac{2\sqrt{a_n}}{b_n} \|f_0\|_\pi^2 \left(\sum_{k=1}^{b_n-1} k \lambda^k + b_n \sum_{k=b_n}^\infty \lambda^k \right) \\
        &\leq \frac{2\sqrt{a_n}}{b_n} \|f_0\|_\pi^2 \sum_{k=1}^\infty k \lambda^k = \frac{2\sqrt{a_n}}{b_n} \cdot \frac{\lambda}{1- \lambda}.
         \numbereqn \label{bias_term_upper}
        \end{align*}
        It follows that $\sqrt{a_n} | E_\pi(\tilde{\sigma}^2_{\BM,f}) + E_\pi (b_n\bar{Y}^2) - \sigma^2_f| \to 0$ if $\sqrt{a_n}/b_n \to 0$ as $n \to \infty$.  Here $\lambda = \|K\| < 1$ (as the chain is geometrically ergodic), and $(\star)$ follows from the fact that $|\gamma_h| = | \langle f_0, K^h f_0 \rangle_\pi | \leq \|K\|^h \|f_0\|_\pi^2 = \lambda^h \|f_0\|_\pi^2$. This proves \ref{lemma1_zero}.  
        
        As for \ref{lemma1_infty}, note that if $K$ is a positive operator, then $\gamma_h = \langle f_0, K^h f_0 \rangle_\pi \geq 0$ for all $h \geq 0$. Moreover, reversibility of $\mcx$ implies, $\gamma_2 = \langle f_0, K^2 f_0 \rangle_\pi = \langle K f_0, K f_0 \rangle_\pi = \|Kf_0\|_\pi^2 > 0$ (since $Kf_0 \not \equiv 0$ by assumption). Consequently, the terms under the absolute sign in the right hand side of \eqref{simplify_lemma1_2} is bounded below by $2\gamma_2 > 0$. As such
        \begin{equation}
        \sqrt{a_n}\left|E_\pi\left(\tilde{\sigma}^2_{\BM,f}\right) + E_\pi\left(b_n\bar{Y}^2\right) - \sigma^2_f \right| \geq 4\frac{\sqrt{a_n}}{b_n} \gamma_2.
        \label{bias_term_lower}
        \end{equation}
        It follows that $\sqrt{a_n} | E_\pi(\tilde{\sigma}^2_{\BM,f}) + E_\pi (b_n\bar{Y}^2) - \sigma^2_f| \to \infty$ if  $\sqrt{a_n}/b_n \rightarrow \infty$ as $n \rightarrow \infty$. This proves (ii). This proves \ref{lemma1_infty}.
    \end{proof}

    With Lemma~\ref{LEMMA_1} and \ref{LEMMA_2} proved, we are now finally in a position to formally prove Theorem~\ref{thm_clt_bme}, which is essentially a combination of these two lemmas, and the fact that the modified batch means estimator is asymptotically equivalent to the batch means estimator.
    
    
    \begin{proof}[Proof of Theorem~\ref{thm_clt_bme}]
        Observe that 
        \begin{align*}
        \sqrt{a_n} \left(\tilde{\sigma}^2_{\BM,f} - \sigma^2_f\right) &= \sqrt{a_n} \left(E_\pi \left(\tilde{\sigma}^2_{\BM,f}\right) + E_\pi\left(b_n\bar{Y}^2\right) - \sigma^2_f \right) \\ 
        &\qquad + \sqrt{a_n} \left( \tilde{\sigma}^2_{\BM, f} - E_\pi(\tilde{\sigma}^2_{\BM, f}) - E_\pi \left(b_n \bar{Y}^2\right) \right)  \\
        &\xrightarrow{d} \N \left(0, 2 \sigma_f^4\right),
        \end{align*}
        from Lemma~\ref{LEMMA_1}, Lemma~\ref{LEMMA_2} and Slutsky's theorem. Therefore,
        \begin{align*}
        \sqrt{a_n} \left(\hat{\sigma}^2_{\BM,f} - \sigma^2_f\right) &= \sqrt{a_n} \left[ \left(\frac{a_n}{a_n-1}\right) \tilde{\sigma}^2_{\BM, f} - \sigma^2_f \right] \\
        &= \left(\frac{a_n}{a_n-1}\right) \sqrt{a_n} \left(\tilde{\sigma}^2_{\BM,f} - \sigma^2_f\right) - \left(\frac{\sqrt{a_n}}{a_n - 1}\right) \sigma_f^2 \\
        &\xrightarrow{d} \N \left(0, 2 \sigma_f^4\right),
        \end{align*}
        by another application of Slutsky's theorem. This completes the proof.
    \end{proof}

    \section{Illustration} \label{sec_illus}
    This section illustrates the applicability of the central limit theorem through replicated frequentist evaluations of the batch means MCMC variance estimator. To elaborate, given a total  iteration size $n+n_0$, where $n$ denotes the final MCMC iteration size and $n_0$ denotes the burn-in size, we generate  replicated $(n+n_0)$-realizations of a Markov chain  with different and independent random starting points, and evaluate an appropriate function $f$ at each Markov chain realization. The batch means MCMC variance estimates $\hat \sigma_{\BM, f}^2(n, a_n, b_n)$ for a few different choices of $b_n$ (and $a_n = n/b_n$) are subsequently computed from each Markov chain after discarding burn-in (to ensure stationarity). This provides a frequentist sampling distribution of $\hat \sigma_{\BM, f}^2(n, a_n, b_n)$ for a given iteration size $n$, batch size $b_n$ and number of batches $a_n$.  The whole experiment is then repeated for increasing values of $n$ to empirically assess the limiting behavior of the corresponding sampling distributions.  
    
    We consider two examples -- a simulated \emph{toy example} (Section~\ref{sec_illus_toy}) with a Markov chain for which the true (population) MCMC variance is known, and a \emph{real example} (Section~\ref{sec_illus_blasso}) with a practically useful Markov chain used that aids Bayesian inference in a  high-dimensional linear regression framework. The former illustrates the validity and accuracy of the CLT while the latter illustrates applicability of our results in real world scenarios. All computations in this section are done in \texttt{R v3.4.4} \citep{R_soft}, and the packages \texttt{tidyverse} \citep{pkg_tidyverse} and \texttt{flare} \citep{pkg_flare} are used. 
    
    \subsection{Toy example: Gibbs sampler with normal conditional distributions} \label{sec_illus_toy}
    
    In this section we consider a two-block toy normal Gibbs sampling Markov chain $(x_n, z_n)_{n \geq 0}$ with a state space $\R^2$ and transition $x \mid z \sim \N(z, 1/4)$ and $z \mid x \sim \N(x/2, 1/8)$. Our interest lies in the $x$-subchain, which evolves as $x_{n+1} = x_n/2 + N(0, 3/8)$. We consider the identity function $f(x) = x$, and seek to estimate the corresponding MCMC variance.  The example has been considered multiple times in the literature \citep{diaconis:khare:saloff:2008, qin:2019, chakraborty:khare:2019} and many operator theoretic properties of the chain have been thoroughly examined.  In particular, the eigenvalues of the associated Markov operator have been obtained as $(2^{-n})_{n \geq 0}$ \citep{diaconis:khare:saloff:2008}. This, together with reversibility of the Markov chain (since the marginal chain of a two-block Gibbs sampler is always reversible, \citep{geyer:1992}) implies geometric ergodicity. It is straight-forward to see that the target stationary distribution $\pi$ is the normal distribution $\N(0, 1/2)$, and the $h$-th order auto-covariance for the $x$ chain, $h \geq 0$, is given by $\gamma_h = \cov_\pi(x_{h}, x_0) = \langle f - E_\pi f, K^h (f - E_\pi f) \rangle =  2^{-(1+h)}$. Consequently, the true (population) MCMC variance of the chain is given by 
    \[
    \sigma^2_f = \gamma_0 + 2 \sum_{h = 1}^\infty \gamma_h = \frac{1}{2} + \sum_{h = 1}^\infty \frac{1}{2^h} = \frac{1}{2} + 1 = 1.5.
    \]
    
    To assess the asymptotic performances of the batch means estimator in this toy example, we generate 5,000 replicates of the proposed Markov chain, each with an iteration size of 520,000 and an independent standard normal starting point for $x$.  In each replicate, after throwing away the initial 20,000 iterations as burn-in, we compute the batch means estimate $\sigma_{\BM, f}^2 (n, a_n, b_n)$ for (i) $b_n = \sqrt{n}$, (ii) $b_n = n^{0.4}$ and (iii) $b_n = n^{1/3 + 10^{-5}}$ separately with the first (after burn-in)  $n = $ 5000, 10,000, 50,000, 100,000 and 500,000 iterations. The estimates are subsequently standardized by the population mean $\sigma^2_f = 1.5$ and the corresponding population standard deviations $\sqrt{2} \sigma_f^2/\sqrt{a_n} = 1.5 \sqrt{2/a_n}$. For each $n$, these standardized estimates from different replicates are then collected and their frequentist sampling distributions are  plotted as separate histograms for different choices of $b_n$ (blue histograms for $b_n = \sqrt{n}$, red histograms for $b_n = n^{0.4}$, and orange histograms for $b_n = n^{1/3 + 10^{-5}}$). These histograms, along with overlaid standard normal curves, are displayed  in Figure~\ref{fig:toynormalfigures}.

    \begin{figure}[ht]
        \centering
        \includegraphics[width=\linewidth]{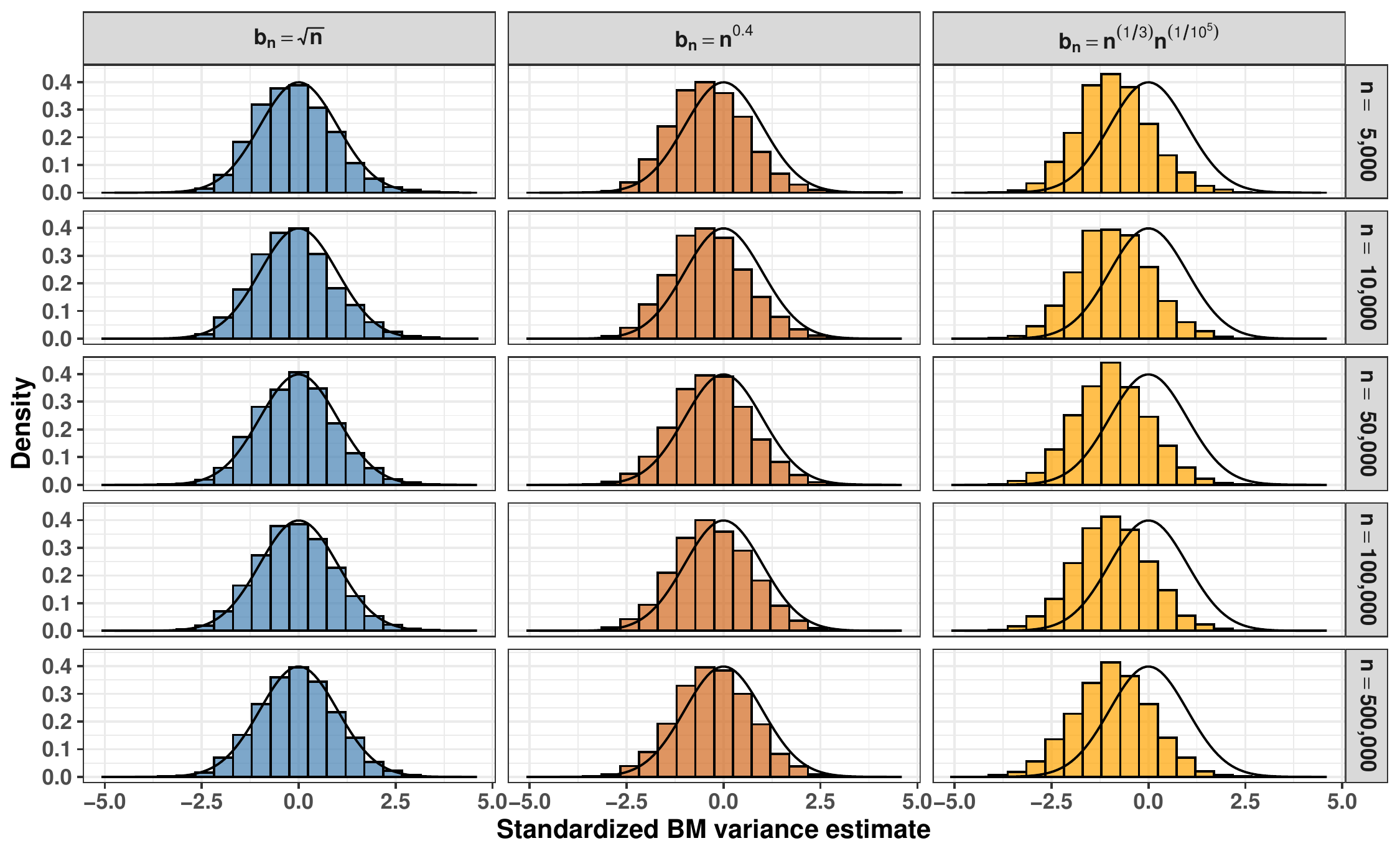}
        \caption{Frequentist sampling distribution of the batch means MCMC variance estimator in the toy normal example. The  sampling distribution of the standardized (with mean = $\sigma^2_f = 1.5$ and standard deviation = $\sqrt{2/a_n} \sigma^2_f = \sqrt{2/a_n} 1.5$) batch means MCMC variance estimator $\hat \sigma^2_{\BM, f}$ for the $x$-subchain obtained from 5,000 replicates are plotted as a matrix of histograms for various choices of $n$ and $b_n$. For each $n \in \{$5,000, 10,000, 50,000, 100,000, 500,000\} (plotted along the vertical direction of the histogram matrix), the blue histogram (left most panel) corresponds to $b_n = \sqrt{n}$, red (middle panel) corresponds to $b_n = n^{0.4}$ and orange (right most panel) corresponds to $b_n = n^{1/3 + 10^{-5}}$. The overlaid black curve on each histogram corresponds to the standard normal density function.}
        \label{fig:toynormalfigures}
    \end{figure}
    
    From Figure~\ref{fig:toynormalfigures}, the following observations are made. First, as $n \to \infty$ the sampling distributions of the BM variance estimates appear to become more ``normal'', i.e., the histograms become more symmetric and bell shaped, for all choices of $b_n$. This is a direct consequence of the CLT proved in Theorem~\ref{thm_clt_bme}. Second, of the three choices of $b_n$ considered, the BM variance estimates associated with $b_n = \sqrt{n}$  are the least biased, followed by $b_n = n^{0.4}$, and the estimates associated with $b_n = n^{1/3 + 10^{-5}}$ are the most biased. This is not surprising, as \eqref{bias_term_upper}  and \eqref{bias_term_lower} show that the asymptotic bias is of the same order of 
    $\sqrt{a_n}/b_n$.  As $n \to \infty$, the bias goes to zero, a fact that is well illustrated through the histograms for $b_n = \sqrt{n}$ (blue histograms) and $b_n = n^{0.4}$ (red histograms). For $b_n = n^{1/3 + 10^{-5}}$ (orange histograms) a much larger $n$ is required.
    
    Finally, to assess the practical utility of the proposed CLT, we note frequentist empirical coverage of approximate normal confidence intervals for the true MCMC variance $\sigma^2_{f}$ . In each replicate for each $(n, b_n)$ pair we first construct a 95\% approximate normal confidence interval with bounds $\hat \sigma_{\BM, f}^2(n, a_n, b_n) \pm 1.96 \sqrt{2/a_n} \hat \sigma_{\BM, f}^2(n, a_n, b_n)$. Then we compute the frequentist coverages of these 95\% confidence intervals by evaluating the proportion of replicates where the corresponding interval contains the true $\sigma^2_{f} = 1.5$, separately for each for each $(n, b_n)$ pair. These frequentist coverages are displayed in Table~\ref{tab:freq_coverage}, which shows near perfect coverage for $b_n = \sqrt{n}$ even for moderate $n$  ($\geq  50,000$), increasingly better coverage for $b_n = n^{0.4}$ (with moderately large $n$), and poor coverage for $b_n = n^{1/3 + 10^{-5}}$ even for large $n$ ($= 500,000$). These results are in concordance with the histograms displayed in Figure~\ref{fig:toynormalfigures}, and demonstrates that for the current problem $b_n = \sqrt{n}$ provides the fastest asymptotic normal convergence among the three choices of $b_n$ considered.  
    
    \begin{table}[ht]
        \centering
        \begin{tabular}{|r|c|c|c|}
            \toprule
            $n$ & $b_n = \sqrt{n}$ & $b_n = n^{0.4}$ & $b_n = n^{1/3 + 10^{-5}}$ \\ 
            \midrule
            5,000 & 0.924 & 0.902 & 0.814 \\ 
            10,000 & 0.927 & 0.907 & 0.810 \\ 
            50,000 & 0.946 & 0.932 & 0.825 \\ 
            100,000 & 0.943 & 0.934 & 0.835 \\ 
            500,000 & 0.949 & 0.941 & 0.834 \\ 
            \bottomrule
        \end{tabular}
        
        \caption{Frequentist coverages of approximate normal 95\% confidence intervals for the MCMC variance $\sigma^2_f$ based on the batch means estimator $\sigma_{\BM, f}^2(n, a_n, b_n)$ for various choices of $n$ and $b_n$.}
        \label{tab:freq_coverage}
        
    \end{table}

    \subsection{Real data example: data augmentation Gibbs sampler for Bayesian lasso regression} \label{sec_illus_blasso}

    This section illustrates the applicability of the proposed CLT in a real world application. Consider the linear regression model 
    \[
    Y \mid \mu, \beta, \eta \sim \N_m (\mu + X\beta,  \eta^2 I_m)
    \]
    where $Y \in \R^n$ is a vector of responses, $X$ is a non-stochastic $m \times p$ design matrix of standardized covariates, $\beta \in \R^p$ is a vector of unknown regression coefficients, $\eta^2 > 0$ is an unknown residual variance,  $\mu \in \R$ is an unknown intercept, $\N_d$ denotes the $d$-variate ($d \geq 1$) normal distribution
    and $I_m$ denotes the $m$-dimensional identity matrix. Interest lies in the estimation of $\beta$ and $\eta^2$. In many modern-day applications, the sample size $m$ is smaller than the number $p$ of covariates. For a meaningful estimation of $\beta$ in such a scenario regularization (i.e., shrinkage towards zero) of the estimate is necessary. A particularly useful regularization approach involves the use of a lasso penalty \citep{tibshirani:1996}, producing lasso estimates of the regression coefficients. The Bayesian lasso framework \citep{park:casella:2008} provides a probabilistic approach to quantifying uncertainties in the lasso estimation. Here, one considers the following hierarchical priors for $\beta$:
    \begin{align*}
    \beta &\sim \N_p(0, \eta^2  D_\tau) \\
    \tau_j &\sim \text{i.i.d. Exponential(rate = }\lambda^2/2)
    \end{align*}
    and estimates $\beta$ through the associated posterior distribution obtained from the Bayes rule:
    \[
    \text{posterior density} \propto \text{prior density} \times \text{likelihood}.
    \]
    Here  $D_\tau$ is the diagonal matrix $\diag\{\tau_1, \dots, \tau_p\}$, and $\lambda > 0$ is a prior hyper-parameter that determines the amount of sparsity in $\beta$. Note that the marginal (obtained by integrating out $\tau_j$'s) prior for $\beta$ is a product of independent Laplace densities, and the associated marginal posterior mode of $\beta$ corresponds to the frequentist lasso estimate of $\beta$. 
    
    It is clear that the target posterior distribution of $\beta$, $\sigma$ and $\tau = (\tau_1, \dots, \tau_p)$ is intractable, i.e., it is not avaialable in closed form, and i.i.d. random generation from the distribution is infeasible. \citet{park:casella:2008} suggested a three-block Gibbs sampler for MCMC sampling from the target posterior which was later shown to be geometrically ergodic \citep{khare:hobert:2013:blasso}. A more efficient (in an operator theoretic sense) two-block version of this three-block Gibbs sampler has been recently proposed in \citet{rajarantnam:sparks:khare:zhang:2017}, where the authors prove the \textit{trace-class} property of the proposed algorithm, which in particular, also implies geometric ergodicity  (recall that a two-block Gibbs sampler is always reversible). One iteration of the proposed two-block Gibbs sampler consists of the following random generations.
    \begin{enumerate}
        \item Generate $(\beta, \eta^2)$ from the following conditional distributions: 
        \begin{align*}
        \eta^2 \mid \tau, Y  & \sim \text{Inverse-Gamma} \displaystyle \left(\frac{(m + p - 1)}{2}, \frac12 \left\| \tilde Y - X \beta \right\|^2 + \frac12 \beta^T D_\tau^{-1} \beta/2 \right) \\
        \beta \mid \eta^2, \tau, Y & \sim \N_p \left(A_\tau^{-1} X^T \tilde{Y}, \eta^2 A_\tau^{-1} \right).
        \end{align*}
        
        \item Independently generate  $\tau_1, \dots, \tau_p$ such that the full conditional distribution of $1/\tau_j$, $j = 1, \dots, p$ is given by
        \[
        1/\tau_j \mid \beta, \eta^2, Y \sim \text{Inverse-Gaussian} \left( \sqrt{\frac{\lambda \eta^2}{\beta_j^2}}, \lambda \right).
        \]
    \end{enumerate}
    Here $\tilde Y = Y - m^{-1} (Y^T 1_m) 1_m$, $1_m$ being the $m$-component vector of 1's, and $A_\tau = X^TX + D_\tau^{-1}$.

    For a real world application of the above sampler we consider the gene expression data of \citet{scheetz:2006}, made publicly available in the \texttt{R} package \texttt{flare} \citep{flare_pkg} as the data set entitled \texttt{eyedata}. The data set consists of $m = 120$ observations on a response variable (expression level) and $p = 200$ predictor variables (gene probes). \citet{rajarantnam:sparks:khare:zhang:2017} analyze this data set in the context of the Bayesian lasso regression, and provide an efficient \texttt{R} implementation of the aforementioned two-block Gibbs sampler in their supplementary document. Following \citep{rajarantnam:sparks:khare:zhang:2017} we standardize the columns of design matrix $X$ and choose the prior (sparsity) hyperparameter as $\lambda = 0.2185$ which ensures that the frequentist lasso estimate (marginal posterior mode) of $\beta$ has $\min\{m, p\}/2 = 60$ non-zero elements. 
    
    We focus on the marginal $(\beta, \eta^2)$ chain of the Bayesian lasso Gibbs sampler described above. This marginal chain is reversible, and we seek to estimate the MCMC variance of the linear regression log-likelihood function 
    \[
    f(\beta, \eta^2, \tau) = -\frac{m}{2} \log (\eta^2) - \frac{1}{2 \eta^2} \| \tilde Y - X\beta \|_2^2 
    \]
    using the batch means variance estimator. To empirically assess the asymptotic behavior of this estimator, we obtain its frequentist sampling distribution as described in the following.  We generate 5,000 replicates of the above Markov chain with independent random starting points (the initial $\beta$ is generated from a standard multivariate normal  distribution and the initial $\eta^2$ is generated from an independent standard exponential distribution). The \texttt{R} script provided in the supplementary document in \cite{rajarantnam:sparks:khare:zhang:2017} is used for the Markov chain generations. On each replicate we run 120,000 iterations of the Markov chain, discard the initial 20,000 iterations as burn-in, and evaluate the log-likelihood at the remaining 100,000 iterations. The BM variance estimator $\sigma_{\BM, f}^2$ is subsequently computed from the evaluated log-likelihood $f$ at the first $n =$  5,000, 10,000, 50,000 and 100,000  iterations and for $b_n = \sqrt{n}$, $b_n = n^{0.4}$ and $b_n = n^{1/3 + 10^{-5}}$, and the resulting replicated estimates are then collected for each $(n, b_n)$ pair. Since the true MCMC variance $\sigma^2_{\BM, f}$ is of course unknown here, we focus on the asymptotic normality of only approximately standardized estimates over replications. More specifically, we first evaluate the mean (over replications) batch means estimate
    \[
    \bar {\hat{\sigma}^2}_{\BM, f}(n = 100,000, a_n, b_n) = \frac{1}{5000} \sum_{l = 1}^{5000} \hat \sigma^2_{\BM, f}(n = 100,000, a_n, b_n)_{(l)}
    \]
    where for each $b_n$ (and hence $a_n$) $\bar {\hat{\sigma}^2}_{\BM, f}(n = 100,000, a_n, b_n)_{(l)}$ denotes the corresponding batch means variance estimate obtained from the $l$th replicate with $n = 100,000$, $l = 1, \dots, 5000$. The estimates $\bar {\hat{\sigma}^2}_{\BM, f}(n = 100,000, a_n, b_n)$ for the above three choices of $b_n$ are displayed in Table~\ref{tab:blasso_sigma2_est}. 
    
    \begin{table}[ht]
    	\centering
    	\begin{tabular}{|r|r|r|r|}
    		\toprule
    		$b_n$ & $\sqrt{n}$ & $n^{0.4}$ &  $n^{1/3 + 10^{-5}}$ \\
    		\midrule 
    		 $\bar {\hat{\sigma}^2}_{\BM, f}(n = 100,000, a_n, b_n)$ & 304.351 & 302.385 & 299.091 \\ 
    		\bottomrule
    	\end{tabular}
    \caption{The mean (over 5000 replications) batch means estimate $\bar {\hat{\sigma}^2}_{\BM, f}(n = 100,000, a_n, b_n)$ of $\sigma_f^2$ obtained from replicated MCMC draws each with iteration size $n =$ 100,000 and batch sizes $b_n = \sqrt{n}$, $n^{0.4}$ and $n^{1/3 + 10^{-5}}$. }
    \label{tab:blasso_sigma2_est}
    \end{table}
    
    After computing $\bar {\hat{\sigma}^2}_{\BM, f}(n = 100,000, a_n, b_n)_{(l)}$, we standardize all replicated batch means estimates with mean = $\bar {\hat{\sigma}^2}_{\BM, f}(n = 100,000, a_n, b_n)$ and standard deviation = $\bar {\hat{\sigma}^2}_{\BM, f}(n = 100,000, a_n, b_n) \sqrt{2/a_n}$ separately  for each  $(n, b_n)$ pair. The frequentist sampling distributions of these \textit{approximately} standardized estimates are plotted as a matrix of histograms for various choices of $n$ and $b_n$, along with overlaid standard normal density curves, in Figure~\ref{fig:blassofigures}. From the figure, it follows that these sampling distributions of the approximately standardized estimates are very closely approximated by a standard normal distribution.  Of course, unlike the histograms displayed in Figure~\ref{fig:toynormalfigures} for the toy normal example (Section~\ref{sec_illus_toy}), no information on the bias of the estimates can be obtained here. However, these histograms  do demonstrate the remarkable accuracy of an asymptotic normal approximation, and thus illustrates the applicability of the proposed CLT for the batch means MCMC variance estimate in a real world application.

    \begin{figure}[ht]
        \centering
        \includegraphics[width=\linewidth]{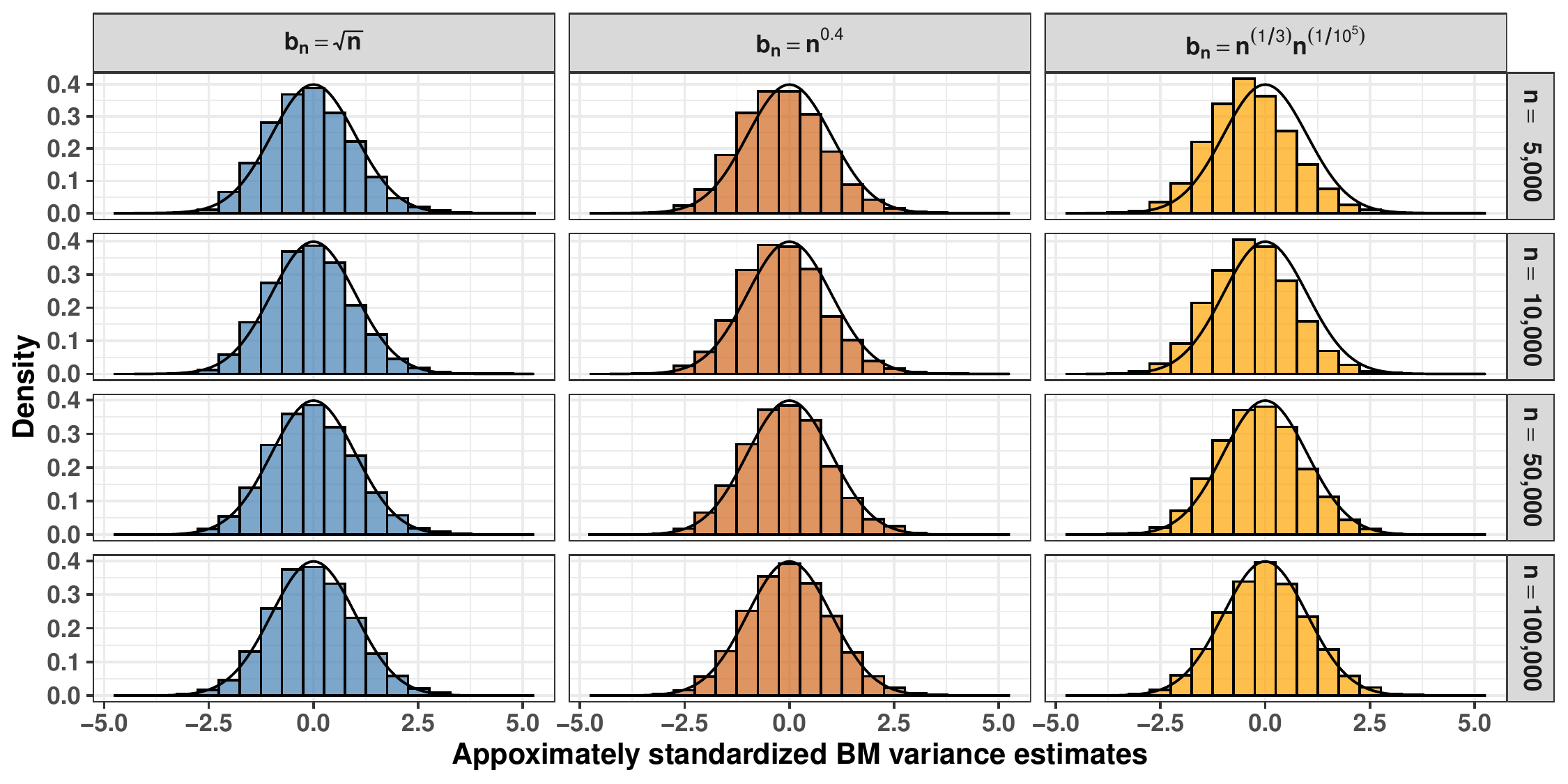}
        \caption{Frequentist sampling distribution of the batch means MCMC variance estimator in the Bayesian lasso example. The  sampling distribution of the \emph{approximately} standardized (with mean =  $\bar {\hat{\sigma}^2}_{\BM, f}(n = 100,000, a_n, b_n)$ and standard deviation = $\bar {\hat{\sigma}^2}_{\BM, f}(n = 100,000, a_n, b_n) \sqrt{2/a_n}$, see Table~\ref{tab:blasso_sigma2_est}) batch means MCMC variance estimator $\hat \sigma^2_{\BM, f}(n, a_n, b_n)$ for the linear regression log-likelihood function $f$ evaluated at the iterations of the Bayesian lasso two block Gibbs sampler are plotted as matrix of histograms for various choices of $n$ and $b_n$. For each $n \in \{$5,000, 10,000, 50,000, 100,000, 500,000\} (plotted along the vertical direction of the histogram matrix), the blue histogram (left most panel) corresponds to $b_n = \sqrt{n}$, red (middle panel) corresponds to $b_n = n^{0.4}$ and orange (right most panel) corresponds to $b_n = n^{1/3 + 10^{-5}}$. The overlaid black curve on each histogram corresponds to the standard normal density function.}
        \label{fig:blassofigures}
    \end{figure}

    \bibliographystyle{apa}
    \bibliography{batchmeans_clt_references}

    \begin{appendix}
        
        \section{Proofs of Results used in Lemma~{\ref{LEMMA_2}}} \label{sec_appen_proof}
        
        \begin{proposition} \label{prop_Exi2givenF}
            Consider $\xi_{k, n}$ as defined in \eqref{xi_defn}, and assume that the assumptions in Theorem~\ref{thm_clt_bme} hold. Then 
            \[
            \frac{1}{a_n - 1} \sum_{k=2}^{a_n} E (\xi_{k, n}^2 \mid \mathcal{F}_{k-1, n}) \xrightarrow{P} 1.
            \]
        \end{proposition}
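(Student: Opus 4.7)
The plan is to show that $S_n := (a_n-1)^{-1}\sum_{k=2}^{a_n} E(\xi_{k,n}^2 \mid \mathcal{F}_{k-1,n})$ satisfies $E_\pi S_n = 1$ exactly and $\var_\pi S_n \to 0$, after which Chebyshev's inequality delivers the convergence in probability.

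First I would note, via the Markov property, that since $\zeta_{k,n}$ is measurable with respect to $(X_{(k-1)b_n}, X_{(k-1)b_n+1}, \dots, X_{kb_n})$, we have $E(\zeta_{k,n}^2 \mid \mathcal{F}_{k-1,n}) = E(\zeta_{k,n}^2 \mid X_{(k-1)b_n}) =: \psi_n(X_{(k-1)b_n})$, so $E(\xi_{k,n}^2 \mid \mathcal{F}_{k-1,n}) = \psi_n(X_{(k-1)b_n})/\tau_n^2$. Stationarity of the chain then gives $E_\pi \psi_n(X_{(k-1)b_n}) = E_\pi \zeta_{k,n}^2 = \tau_n^2$, and hence $E_\pi S_n = 1$ for every $n$.

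Next, expanding the variance of $S_n$ yields $\var_\pi S_n = (a_n-1)^{-2}\tau_n^{-4} \sum_{k,l=2}^{a_n} \cov_\pi(\psi_n(X_{(k-1)b_n}), \psi_n(X_{(l-1)b_n}))$. I would exploit geometric ergodicity and reversibility via the standard operator-theoretic bound $|\cov_\pi(g(X_0), g(X_j))| \leq \|K\|^j \|g - E_\pi g\|_\pi^2$ valid for every $g \in L^2(\pi)$. Applied with $g = \psi_n$ and lag $|k-l|b_n$, and using $\sum_{j=0}^\infty \lambda^{jb_n} \leq (1-\lambda)^{-1}$ where $\lambda = \|K\| < 1$, the double sum collapses to order $a_n$, producing the bound $\var_\pi S_n \leq C a_n^{-1}\, \var_\pi \psi_n(X_0)/\tau_n^4$ for a constant $C$ depending only on $\lambda$.

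It then remains to verify that $\var_\pi \psi_n(X_0) = O(\tau_n^4) = O(b_n^{-4})$, where the second equality uses $b_n^2 \tau_n^2 \to 2\sigma_f^4$ established through Propositions~\ref{prop_U_n} and \ref{prop_V_n}. Conditional Jensen gives $\var_\pi \psi_n \leq E_\pi \psi_n^2 \leq E_\pi \zeta_{k,n}^4$, reducing the task to a fourth moment estimate for $\zeta_{k,n}$. Expanding via Minkowski, this in turn splits into bounding $E_\pi \bar{Y}_k^8$, which is $O(b_n^{-4})$ under the eighth moment assumption and geometric ergodicity by arguments analogous to Proposition~\ref{prop_Y1bar4}, and $E_\pi \tilde{g}(X_{kb_n})^4$, requiring an $L^4$ bound on $\tilde{g} = (I-K^{b_n})^{-1}\tilde{h}$.

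The main obstacle I anticipate is the $L^4$ control of $\tilde{g}$: geometric ergodicity supplies operator norm control of $K$ only in $L^2$, not in $L^4$. I would handle this by combining the $L^2$ geometric contraction $\|K^{jb_n}\tilde{h}\|_\pi \leq \lambda^{jb_n}\|\tilde{h}\|_\pi$ with the uniform $L^p$ contraction of Markov operators (which follows from conditional Jensen and stationarity) through an interpolation argument applied to the Neumann series $\tilde{g} = \sum_{j\geq 0} K^{jb_n}\tilde{h}$, using the fact that $\tilde{h}$ has finite moments up to order four under the $E_\pi(f^8) < \infty$ hypothesis. Putting these estimates together gives $\var_\pi \psi_n(X_0)/\tau_n^4 = O(1)$, hence $\var_\pi S_n = O(1/a_n) \to 0$, completing the argument.
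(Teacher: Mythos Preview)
Your proposal follows essentially the same route as the paper: reduce to showing $E_\pi S_n = 1$ and $\var_\pi S_n \to 0$, then control the covariances $\cov_\pi(\psi_n(X_{kb_n}),\psi_n(X_{k'b_n}))$ by the operator-norm bound $|\langle \check h, K^{(k'-k)b_n}\check h\rangle_\pi|\le \lambda^{(k'-k)b_n}\|\check h\|_\pi^2$ and sum the resulting geometric series. The paper's write-up simply asserts that $B=\|\check h\|_\pi^2<\infty$ ``as $E_\pi(f^8)<\infty$'' and then divides by $a_n-1$, implicitly treating $B$ as constant in $n$; you are more careful in recognizing that $B=B_n$ depends on $n$ through $b_n$ and that one must check $B_n/\tau_n^4=O(1)$ (or at least $o(a_n)$). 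Your reduction of this to a fourth-moment bound on $\zeta_{k,n}$, and in particular to an $L^4$ estimate for $\tilde g=(I-K^{b_n})^{-1}\tilde h$, is the right diagnosis: the $L^2$ spectral gap alone does not give this, and the Riesz--Thorin type interpolation you propose (between the $L^2$ contraction $\|K^m\|_{L^2_0}\le\lambda^m$ and the $L^p$ contraction $\|K^m\|_{L^p}\le 1$) does yield $\|K^{jb_n}\tilde h\|_4\lesssim \lambda^{jb_n/2}\|\tilde h\|_4$, so the Neumann series converges in $L^4$ with $\|\tilde g\|_4^4=O(b_n^{-4})$. In effect your proposal supplies a justification for a step the paper leaves implicit; the overall strategy is the same.
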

        
        \begin{proof}
            Observe that, due to the Markov property of $\mcx$, $E (\xi_{k, n}^2 \mid \mathcal{F}_{k-1, n})$ is a function only of $X_{(k-1)b_n}$, for all $k = 2, \dots, a_n$. Define $\check h(X_{(k-1)b_n}) = E (\xi_{k, n}^2 \mid \mathcal{F}_{k-1, n}) - 1$, with $\check h(X_{kb_n})  \in L_0^2(\pi)$ for all $k, n$, as $E_\pi (f^8) < \infty$ and $E_\pi (\xi_{k, n}^2) = 1$. It is enough to show that the mean squared convergence
            \[
            E_\pi \left[\frac{1}{a_n - 1} \sum_{k = 1}^{a_n - 1} \check h(X_{(k-1)b_n}) \right]^2 \to 0
            \]
            holds. To this end, note that
            \begin{align*} \label{E_mean2_check_h}
            &\quad E_\pi \left[\frac{1}{a_n - 1} \check h(X_{(k-1)b_n}) \right]^2  \\
            &= \frac{1}{(a_n - 1)^2} \sum_{k=1}^{a_n - 1} E_\pi \left[\check h(X_{kb_n})^2 \right] + 
            \frac{2}{(a_n - 1)^2} \mathop{\sum \sum}_{1 \leq k < k' \leq a_n - 1} E_\pi \left[\check h(X_{kb_n}) \check h(X_{k'b_n}) \right]. \numbereqn
            \end{align*} 
            Due to stationarity of $\mcx$, $E_\pi \left[\check h(X_{kb_n})^2 \right]$ is the same for all $k \geq 1$, say $B = \|\check h\|_\pi^2 =  E_\pi \left[\check h(X_{kb_n})^2 \right]$, where $B < \infty$ as $E_\pi (f^8) < \infty$. Consequently  
            \[
            \frac{1}{(a_n - 1)^2} \sum_{k=1}^{a_n - 1} E_\pi \left[\check h(X_{kb_n})^2 \right] = \frac{1}{a_n} \|\check h \|^2_\pi \to 0
            \]
            as $n \to \infty$,  and it remains to show that the second term in \eqref{E_mean2_check_h} also converges to zero. Note that,
            \begin{align*}
            &\quad \left| \frac{1}{(a_n - 1)^2}\mathop{\sum \sum}_{1 \leq k < k' \leq a_n - 1} E_\pi \left[\check h(X_{kb_n}) \check h(X_{k'b_n}) \right] \right| \\
            &\leq   \frac{1}{(a_n - 1)^2}\mathop{\sum \sum}_{1 \leq k < k' \leq a_n - 1} \left| E_\pi \left[\check h(X_{kb_n}) E \left(\check h(X_{k'b_n}\right) \mid X_{kb_n}) \right] \right|  \\
            &= \frac{1}{(a_n - 1)^2}\mathop{\sum \sum}_{1 \leq k < k' \leq a_n - 1} \left| E_\pi \left[\check h(X_{kb_n})  \left(K^{b_n (k'-k)} h \right) (X_{kb_n}) \right] \right| \\
            &= \frac{1}{(a_n - 1)^2}\mathop{\sum \sum}_{1 \leq k < k' \leq a_n - 1}  \left| \left\langle \check h,  K^{b_n (k'-k)} \check h \right\rangle_\pi \right| \\
            & \stackrel{(\star)}{\leq} \frac{1}{(a_n - 1)^2}\mathop{\sum \sum}_{1 \leq k < k' \leq a_n - 1}  \left\| \check h \right\|_\pi  \left\| K^{b_n (k'-k)} \check h \right\|_\pi \\
            & \stackrel{(\star \star)}{\leq} \frac{1}{(a_n - 1)^2} \mathop{\sum \sum}_{1 \leq k < k' \leq a_n - 1} \|\check h\|_\pi^2 \lambda^{b_n (k' - k)} \\
            & = \frac{1}{(a_n - 1)^2}\left\| \check h \right\|_\pi^2 \sum_{k = 1}^{a_n - 1}  \sum_{r = 1}^{a_n-1-k} \lambda^{rb_n} \\
            & \leq \frac{\left\| \check h \right\|_\pi^2}{(a_n - 1)^2}  \sum_{k = 1}^{a_n - 1}  \sum_{r = 1}^{\infty} \lambda^{rb_n} \\
            & =  \frac{\left\| \check h \right\|_\pi^2}{(a_n - 1)} \frac{1}{1 - \lambda^{rb_n}}
            \to 0 
            \end{align*}
            as $n \to \infty$, where $(\star)$ follows from the Schwarz inequality, and $(\star \star)$ follows from the  operator norm inequality $\|K \check h\|_\pi \leq \|K\| \|h\|_\pi$, and as before we let $\lambda = \|K\|$ with $\lambda \in (0, 1)$ due to geometric ergodicity of $\mcx$. This completes the proof.
            
        \end{proof}

        \begin{proposition} \label{prop_Y1bar4}
            Under the setup assumed in Theorem~\ref{thm_clt_bme}, we have $E_\pi \left(b_n^2 \bar Y_k^4\right) \to 3 \sigma_f^4$ as $n \to \infty$,  for each $k = 1, \dots, a_n$.
        \end{proposition}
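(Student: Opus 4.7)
By stationarity of $\mcx$, the quantity $E_\pi(b_n^2 \bar Y_k^4)$ does not depend on $k$, so it is enough to take $k = 1$. Writing $S_{b_n} := b_n \bar Y_1 = \sum_{i=1}^{b_n} Y_i$ with $Y_i = f(X_i) - E_\pi f$, the claim reduces to the fourth-moment convergence $b_n^{-2} E_\pi(S_{b_n}^4) \to 3 \sigma_f^4$ for the normalized partial sums $S_{b_n}/\sqrt{b_n}$.

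The plan is to promote a distributional limit to moment convergence via uniform integrability. Under the hypotheses of Theorem~\ref{thm_clt_bme} (reversibility, geometric ergodicity, $E_\pi(f^8) < \infty$), the MCMC CLT gives $S_{b_n}/\sqrt{b_n} \xrightarrow{d} Z \sim \N(0, \sigma_f^2)$ as $b_n \to \infty$, and the fourth moment of this Gaussian limit is exactly $3 \sigma_f^4$. It therefore suffices to verify uniform integrability of $\{(S_{b_n}/\sqrt{b_n})^4\}_{n \geq 1}$, which follows from a uniform bound of the form $\sup_n E_\pi[(S_{b_n}/\sqrt{b_n})^{4+\delta}] < \infty$ for some $\delta > 0$.

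To obtain such a bound I would use a Gordin-type martingale approximation in the spirit of the proof of Lemma~\ref{LEMMA_2}. Since $\|K\| < 1$ on $L_0^2(\pi)$ by geometric ergodicity, the Poisson equation $g - Kg = f_0$ (with $f_0 = f - E_\pi f$) has the $L^2$-solution $g = \sum_{j \geq 0} K^j f_0$, yielding the decomposition $Y_i = D_{i+1} + g(X_i) - g(X_{i+1})$ where $D_{i+1} := g(X_{i+1}) - Kg(X_i)$ is a stationary martingale-difference sequence relative to the natural filtration $\sigma(X_1, \dots, X_{i+1})$. Telescoping gives $S_{b_n} = \sum_{i=1}^{b_n} D_{i+1} + g(X_1) - g(X_{b_n+1})$. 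Burkholder's inequality applied to the martingale piece, combined with Jensen's inequality for the inner sum of squares, produces an $O(b_n^{(4+\delta)/2})$ bound on its $(4+\delta)$-th absolute moment, while the telescoping boundary terms contribute only an $O(1)$ error by stationarity of the $g(X_i)$'s.

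The main obstacle is verifying that $g$, and hence each $D_{i+1}$, lies in $L^{4+\delta}(\pi)$, which is not automatic from $L^2$-contractivity of $K$ alone. For reversible, geometrically ergodic chains this can be handled by Riesz-Thorin interpolation between $\|K\|_{L_0^2(\pi)} < 1$ and the trivial Markov-operator bound $\|K\|_{L^\infty} \leq 1$, which yields $\|K\|_{L_0^p(\pi)} < 1$ for every $2 \leq p < \infty$ and hence makes the Neumann series defining $g$ converge in $L^{4+\delta}(\pi)$ whenever $f_0 \in L^{4+\delta}(\pi)$ (satisfied since $E_\pi(f^8) < \infty$ permits $\delta$ up to $4$). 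Alternatively, one may invoke a Rosenthal-type moment inequality for stationary, geometrically ergodic Markov chains to obtain $E_\pi(|S_n|^{4+\delta}) \leq C n^{(4+\delta)/2}$ directly under $E_\pi(|f|^{4+\delta}) < \infty$. Once the uniform-integrability bound is in hand, Vitali's theorem applied to the MCMC CLT delivers $b_n^{-2} E_\pi(S_{b_n}^4) \to 3 \sigma_f^4$, as required.
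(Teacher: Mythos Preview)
Your approach is sound and genuinely different from the paper's. The paper does a bare-hands fourth-moment expansion: it writes
\[
n^{-2} E_\pi\Bigl(\sum_{i=1}^n Y_i\Bigr)^4 = U_1 + U_2 + U_3 + U_4 + U_5
\]
according to the multiplicity pattern of the indices, and then analyzes each $U_j$ term by term using only the covariance decay $|\gamma_h| = |\langle f_0, K^h f_0\rangle_\pi| \leq \lambda^h \|f_0\|_\pi^2$ together with H\"older/Cauchy--Schwarz. The limits $U_3 \to 3\mu_2^2$, $U_4 \to 12\mu_2 \sum_{r\geq 1}\gamma_r$, $U_5 \to 3\bigl(2\sum_{r\geq 1}\gamma_r\bigr)^2$ are computed explicitly and reassembled into $3\sigma_f^4$, while $U_1, U_2$ and all ``cross'' remainders are shown to be $O(1/n)$. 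This is elementary and fully self-contained (no Burkholder, no interpolation, no external moment inequalities), at the cost of several pages of case analysis; it also yields explicit $O(1/n)$ error rates for the subleading pieces, which your route does not.

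Your CLT-plus-uniform-integrability argument is shorter in spirit and generalizes immediately to any moment below the eighth. One technical point deserves care: a single Riesz--Thorin step between $\|K\|_{L_0^2} = \lambda < 1$ and a contractive $L^\infty$ bound does not literally yield $\|K\|_{L_0^p} < 1$, because passing to the mean-zero scale via the projection $I - \Pi$ costs a factor $\|I-\Pi\|_{L^\infty} \leq 2$ at the $L^\infty$ endpoint. What \emph{does} follow cleanly is the iterate bound $\|K^j - \Pi\|_{L^p \to L^p} \leq 2^{1-2/p}\lambda^{2j/p}$ (interpolate each $K^j - \Pi$ between $\|K^j - \Pi\|_{L^2} = \lambda^j$ and $\|K^j - \Pi\|_{L^\infty} \leq 2$), whence $\sum_j \|K^j f_0\|_p < \infty$ and $g \in L^{4+\delta}(\pi)$; this is all you actually need. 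Your alternative appeal to a Rosenthal-type moment bound for geometrically ergodic chains sidesteps the issue entirely. With either fix, Burkholder plus the telescoping boundary gives $\sup_n E_\pi|S_{b_n}/\sqrt{b_n}|^{4+\delta} < \infty$, uniform integrability follows, and Vitali upgrades the distributional CLT to $b_n^{-2} E_\pi(S_{b_n}^4) \to 3\sigma_f^4$.
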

        
        \begin{proof}
            On the outset, note that since $(X_n)_{n \geq 0}$ is stationary, $E_\pi(\bar Y_1^4) = E_\pi(\bar Y_k^4)$. Moreover, since
            $b_n \to \infty$ as $n \to \infty$, it is therefore enough to show that as $n \to \infty$, 
            \[
            \frac{1}{n^2} E_\pi\left(Y_1 + Y_2 + \dots + Y_n \right)^4\to 3 \sigma_f^4.
            \]
            For the remainder of the proof, we shall therefore replace $b_n$ by $n$. We will proceed by expanding $E_\pi(Y_1 + Y_2 + \cdots + Y_n)^4$ and analyzing relevant terms separately. First, let us define $\mu_r = E_\pi(Y_1^r) = E_\pi [f(X_i) - E_\pi f]^r$ for $r = 2, 4, 6$. Note that $E_\pi (f^8) < \infty$ implies that $\mu_r < \infty$ for all $r = 2, 4, 6$. Now observe that, 
            \begin{align*}
            & \quad \frac{1}{n^2} E_\pi \left(Y_1+Y_2+\cdots+Y_n\right)^4\\
            &= \frac{1}{n^2} E_\pi \left(\sum_{i=1}^n Y_i^4 + 4\sum_{i \neq j} Y_i^3 Y_j + 6\sum_{i<j} Y_i^2 Y_j^2 + 12\sum_{i \neq j \neq k, j < k} Y_i^2 Y_j Y_k + \sum_{i \neq j \neq k \neq l} Y_i Y_j Y_k Y_l \right)\\
            &= U_1 + U_2 + U_3 + U_4 + U_5, \text{ say,}
            \end{align*}
            and we shall consider the convergence of each $U_i$, $i = 1, \dots, 5$  separately. Since $\mu_4 < \infty$ and the chain is stationary, it follows that $E_\pi (n^{-1} \sum_{i=1}^{n} Y_i^4) = \mu_4$ for all $n$, so that
            \begin{equation} \label{U_1_limit}
            U_1 =  E_\pi \left(\frac{1}{n^{2}} \sum_{i=1}^{n} Y_i^4 \right) = \frac{1}{n} E_\pi \left(\frac{1}{n} \sum_{i=1}^{n} Y_i^4 \right) \to 0.
            \end{equation}
            
            \noindent As for $U_2$, note that,
            \begin{align*}
            |U_2| 
            &= \frac{4}{n^2} \left|\sum_{i\neq j} E_\pi \left(Y_i^3 Y_j\right) \right| \\
            &\leq \frac{4}{n^2}\sum_{i\neq j} \left| E_\pi \left(Y_i^3 Y_j\right) \right| \\
            &= \frac{8}{n^2} \sum_{i<j} \left| E_\pi \left(Y_i^3 Y_j\right) \right| \\ 
            &= \frac{8}{n^2} \sum_{i<j} \left| E_\pi E \left[ f_0^3(X_i) f_0(X_j) \mid X_i \right] \right| \\
            &= \frac{8}{n^2} \sum_{i<j} \left|E_\pi \left[f_0^3(X_i) K^{j-i} f_0(X_i) \right] \right| \\
            &\leq \frac{8}{n^2} \sum_{i<j} E_\pi\left|f_0^3(X_i) K^{j-i}f_0(X_j)\right| \\
            &\stackrel{(\star_1)}{\leq} \frac{8}{n^2} \sum_{i<j} \left[E_\pi \left|f_0^3(X_i)\right|^{\frac{4}{3}} \right]^{\frac{3}{4}} \left[E_\pi\left|K^{j-i}f(X_i)\right|^4\right]^{\frac{1}{4}} \\
            &\leq \frac{8}{n^2}\sum_{i<j} \left[E_\pi f_0^4(X_i)\right]^{\frac{3}{4}} \lambda^{j-i} \left[E_\pi f_0^4(X_i)\right]^{\frac{1}{4}} \\
            &= \frac{8}{n^2} \ \mu_4 \sum_{i<j} \lambda^{j-i} \\
            &= \frac{8}{n^2} \ \mu_4 \sum_{k=1}^{n-1} (n-k) \lambda^k \\
            &\leq \frac{8}{n^2} \sum_{k=1}^\infty n \lambda^k
            = \frac{8}{n} \mu_4\ \lambda \left(1-\lambda\right)^{-1} 
            \end{align*}
            Here, as defined in Lemma~\ref{LEMMA_1}, $f_0 = f - E_\pi f \in L^2_0(\pi)$, $\lambda = \|K\| \in (0, 1)$, and $(\star_1)$ is a consequence of H\"older's inequality. Thus,
            \begin{equation} \label{U_2_limit}
            U_2 \to 0 \text{ as } n \to \infty.
            \end{equation}
            
            \noindent Next we focus on $U_3$. Since
            \begin{equation*}
            E_\pi\left[Y_i^2 Y_j^2\right] 
            = E_\pi\left[Y_i^2 \left(Y_j^2-\mu_2\right) \right] + \mu_2^2 
            = E_\pi\left[f_0^2(X_i) \tilde{S}(X_j) \right] + \mu_2^2
            \end{equation*}
            where $\tilde{S}(x)=f_0^2(x) - \mu_2\in L_0^2(\pi)$, therefore,  
            \begin{align*}
            U_3 = \frac{6}{n^2} \sum_{i<j} E_\pi \left(Y_i^2 Y_j^2\right) 
            &= \frac{6}{n^2}\sum_{i<j}E_\pi\left[f_0^2(X_i) \tilde{S}(X_j)\right] + \frac{6}{n^2} \frac{n(n-1)}{2}\mu_2^2 \\
            &= \frac{6}{n^2} \sum_{i<j}E_\pi\left[f_0^2(X_i) \tilde{S}(X_j)\right] + 3\left(1 - \frac{1}{n}\right) \mu_2^2.
            \end{align*}
            
            \noindent Now
            \begin{align*}
            \left| \frac{6}{n^2} \sum_{i<j} E_\pi \left[f_0^2(X_i) \tilde{S}(X_j) \right] \right|
            &\leq \frac{6}{n^2} \sum_{i<j} \left| E_\pi \left[f_0^2(X_i) \tilde{S}(X_j)\right]\right| \\
            &=\frac{6}{n^2} \sum_{i<j} \left|E_\pi\left[ f_0^2(X_i) E\left[ \tilde{S}(X_j) \mid X_i\right] \right] \right|\\
            &= \frac{6}{n^2} \sum_{i<j} \left|E_\pi\left[f_0^2(X_i) K^{j-i} \tilde{S}(X_i) \right] \right| \\
            &\stackrel{(\star_2)}{\leq} \frac{6}{n^2}  \sum_{i<j} \left[E_\pi f_0^4(X_i) \right]^{\frac{1}{2}} \left[E_\pi \left[K^{j-i} \tilde{S}(X_i) \right]^2 \right]^{\frac{1}{2}} \\
            &\leq \frac{6}{n^2}  \sum_{i<j} \left[E_\pi f_0^4(X_i) \right]^{\frac{1}{2}} \|K\|^{j-i} \left[E_\pi \left(f_0^2(X_i) - \mu_2\right)^2 \right]^{\frac{1}{2}} \\
            &\leq \frac{6}{n^2}  \sum_{i<j} \left[E_\pi f_0^4(X_i) \right]^{\frac{1}{2}} \|K\|^{j-i} \left[E_\pi f_0^4(X_i) \right]^{\frac{1}{2}} \\
            &\leq \frac{6}{n^2} \mu_4 \sum_{i<j} \lambda^{j-i} \\
            &\leq \frac{6}{n^2} \mu_4 \sum_{k=1}^n n\lambda^k \\
            &\leq \frac{6}{n} \mu_4 \ \lambda \left(1-\lambda\right)^{-1} \to 0\ \text{as }n\to\infty.
            \end{align*}
            Here $(\star_2)$ follows from Schwarz's inequality.  Consequently,
            \begin{equation}\label{U_3_limit}
            U_3 \to 3 \mu_2^2 \text{ as } n \to \infty.
            \end{equation}
            
            \noindent Next we  consider $U_4$. Observe that
            \begin{align*}
            U_4 
            &= \frac{12}{n^2} \left[\sum_{i<j<k} E_\pi \left(Y_i^2 Y_j Y_k \right) + \sum_{j<k<i} E_\pi \left(Y_i^2 Y_j Y_k \right) + \sum_{j<i<k} E_\pi \left(Y_i^2 Y_j Y_k \right)\right] \\
            &= \frac{12}{n^2} \left[\sum_{i<j<k} \mu_2\ E_\pi\left(Y_jY_k\right) + \sum_{j<k<i} \mu_2\ E_\pi\left(Y_j Y_k\right) \right] \\
            & \qquad + \frac{12}{n^2}\left[ \sum_{i<j<k} E_\pi\left[\widetilde{Y_i^2} Y_j Y_k\right] + \sum_{j<k<i} E_\pi\left[\widetilde{Y_i^2} Y_j Y_k\right] + \sum_{j<i<k} E_\pi\left[Y_i^2 Y_j Y_k \right]\right] \\
            &= U_4^{(1)} + U_4^{(2)}, \text{ say}.
            \end{align*}
            Here $\widetilde{Y_i^2} = \tilde S(X_i)$ =  $Y_i^2 - \mu_2 \in L^2(\pi)$. Note that
            \begin{align*}
            \ U_4^{(1)} &= \frac{12}{n^2}\ \mu_2 \left[\sum_{i<j<k} \langle f_0, K^{j-k} f_0 \rangle_\pi + \sum_{j<k<i} \langle f_0, K^{j-k} f_0 \rangle_\pi \right] \\
            &= \frac{12}{n^2}\ \mu_2 \sum_{r=1}^{n-2}(n-r-1)(n-r)\ \langle f_0, K^r f_0 \rangle_\pi \\
            &= 12\ \mu_2 \sum_{r=1}^{n-2} \left(1 - \frac{r-1}{n}\right) \left(1 - \frac{r}{n}\right) \langle f_0, K^r f_0 \rangle_\pi \\
            &\to 12\ \mu_2 \sum_{r=1}^{\infty} \langle f_0, K^r f_0 \rangle_\pi
            = 12\ \mu_2 \sum_{r = 1}^\infty \gamma_r \numbereqn \label{U_4_1_limit}
            \end{align*}
            as $n \to \infty$, where $\gamma_h$'s are the auto-covariances as defined in \eqref{defn_gamma_h}, and  the last convergence follows from the dominated convergence theorem. As for $U_4^{(2)}$, observe that
            \begin{equation} \label{U_4_2_upperbd}
            \left| U_4^{(2)} \right| \leq \frac{12}{n^2} \left[\sum_{i<j<k} \left|E_\pi \left(\widetilde{Y_i^2} Y_j Y_k\right) \right| + \sum_{j<k<i} \left|E_\pi \left(\widetilde{Y_i^2} Y_j Y_k\right) \right| + \sum_{j<i<k} \left| E_\pi \left(Y_i^2 Y_j Y_k \right) \right| \right].
            \end{equation}
            For $i<j<k$,
            \begin{align*}
            \left|E_\pi\left(\widetilde{Y_i^2} Y_j Y_k\right)\right| 
            &\stackrel{(\star_3)}{=} \left|E_\pi\left[Y_j Y_k E\left(\widetilde{Y_i^2} 
            \mid X_j, X_k\right) \right] \right| \\
            & \stackrel{(\star_4)}{=} \left|E_\pi \left[Y_j Y_k E \left(\widetilde{Y_i^2} \mid X_j \right) \right] \right| \\
            &\leq E_\pi \left|Y_j Y_k K^{j-i} \tilde{Y_j}^2 \right| \\
            &\stackrel{(\star_5)}{\leq} \left[E_\pi \left(Y_j^2 Y_k^2 \right)\right]^{\frac{1}{2}} \left[E_\pi\left( K^{j-i}\tilde{Y_j}^2\right)^2\right]^{\frac{1}{2}} \\ 
            &\stackrel{(\star_6)}{\leq} \sqrt{\left[E_\pi \left(Y_j^4 \right)\right]^{\frac{1}{2}} \left[ E_\pi \left(Y_k^4 \right) \right]^{\frac{1}{2}}}\  \lambda^{j-i} \sqrt{E_\pi \left( \tilde{Y_j}^4\right)} \\
            &\leq 4 \lambda^{j-i} \mu_4 \numbereqn \label{E_tYi2YjYk_upper_bd_1}.
            \end{align*}
            Here $(\star_3)$ and $(\star_4)$ are consequences of reversibility and Markov property respectively, and $(\star_5)$ and $(\star_6)$ are due to Schwarz's inequality. Again for $i < j < k$,
            \begin{align*}
            \left|E_\pi \left(\widetilde{Y_i^2} Y_j Y_k \right) \right| 
            &=\left|E_\pi \left[\widetilde{Y_i^2} Y_j E\left(Y_k \mid X_i,X_j \right)\right] \right| \\
            &\stackrel{(\star_7)}{=} \left|E_\pi \left[ \widetilde{Y_i^2} Y_j E\left(Y_k \mid X_j \right)\right]\right| \\
            &\leq E_\pi \left|\widetilde{Y_i^2} Y_j K^{k-j} Y_j\right| \\
            &\stackrel{(\star_8)}{=} 8 \lambda^{k-j} \sqrt{\mu_2 \mu_6} \numbereqn \label{E_tYi2YjYk_upper_bd_2} 
            \end{align*}
            where $(\star_7)$ is due to the Markov property, and $(\star_8)$ follows from H\"older's inequality. Therefore, from \eqref{E_tYi2YjYk_upper_bd_1} and \eqref{E_tYi2YjYk_upper_bd_2}, we get			
            \begin{align*}
            \left|E_\pi\left[\widetilde{Y_i^2}Y_jY_k\right]\right| 
            &\leq \min \left\{\lambda^{j-i}, \lambda^{k-j} \right\} \left(4 \mu_4 + 8 \sqrt{\mu_2 \mu_6}\right) \\
            &= \left(\sqrt{\lambda} \right)^{2\max \left\{j-i, k-j\right\}}\left(4\mu_4+8\sqrt{\mu_2\mu_6}\right) \\
            &\leq \left(\sqrt{\lambda}\right)^{k-i} \left(4\mu_4+8\sqrt{\mu_2\mu_6}\right) 
            \end{align*}
            where the last inequality is a consequence of the fact that for two real numbers $a$ and $b$, $a+b \leq 2 \max\{a, b\}$ and that $\lambda = \|K\| \in (0, 1)$. Hence,
            \begin{align*}
            \frac{12}{n^2} \sum_{i<j<k} \left|E_\pi \left( \widetilde{Y_i^2} Y_j Y_k \right) \right|
            &\leq \frac{12}{n^2} \left(4 \mu_4 + 8 \sqrt{\mu_2 \mu_6}\right) \sum_{i<j<k}\left(\sqrt{\lambda}\right)^{k-i} \\
            &\leq \frac{12}{n^2} \left(4 \mu_4 + 8 \sqrt{\mu_2 \mu_6}\right) \sum_{r=2}^{n-1} (n-r)(r-1) \left(\sqrt{\lambda} \right)^r \\
            &\leq \frac{12}{n} \left(4\mu_4 + 8 \sqrt{\mu_2 \mu_6}\right) \sum_{r=1}^\infty r \left(\sqrt{\lambda}\right)^r \\
            &= \frac{12}{n} \left(4 \mu_4 + 8 \sqrt{\mu_2 \mu_6}\right) \sqrt{\lambda} \left(1 - \sqrt{\lambda}\right)^{-2}\to 0\ \text{ as } n \to \infty. \\
            \end{align*}
            By similar arguments, it can be shown that
            \[
            \frac{12}{n^2} \sum_{j < k < i} \left|E_\pi \left( \widetilde{Y_i^2} Y_j Y_k \right) \right| \to 0,  \text{ and } \
            \frac{12}{n^2} \sum_{j < i < k} \left|E_\pi \left( \widetilde{Y_i^2} Y_j Y_k \right) \right| \to 0
            \]
            as $n \to \infty$, which, from \eqref{U_4_2_upperbd} implies,
            \begin{equation} \label{U_4_2_limit}
            U_4^{(2)} \to 0 \text{ as } n \to \infty. 
            \end{equation}
            
            \noindent It follows from \eqref{U_4_1_limit} and \eqref{U_4_2_limit} that 
            \begin{equation} \label{U_4_limit}
            U_4 = U_4^{(1)} + U_4^{(2)} \to 12\ \mu_2 \sum_{r = 1}^\infty \gamma_h \text{ as } n \to \infty.
            \end{equation}
            
            \noindent Finally, we focus on $U_5$. Note that
            \begin{align*}
            \ U_5 &= \frac{24}{n^2}\sum_{i<j<k<l} E_\pi\left(Y_i Y_j Y_k Y_l\right) \\
            &=  \frac{24}{n^2} \sum_{i<j<k<l} E_\pi\left(Y_i Y_j\right) E_\pi\left(Y_k Y_l\right) + \frac{24}{n^2} \sum_{i<j<k<l} E_\pi \left(\left[Y_i Y_j - E_\pi\left(Y_i Y_j\right)\right] Y_k Y_l\right) \\
            &= U_5^{(1)} + U_5^{(2)}, \text{ say}.
            \end{align*}
            Then,
            \begin{align*}
            \ U_5^{(1)}
            &=\frac{24}{n^2} \sum_{i<j<k<l} \langle f, K^{j-i} f\rangle_\pi \langle f, K^{l-k}f \rangle_\pi \\
            &= \frac{24}{n^2} \sum_{r=1}^{\floor*{\frac{n}{2}-1}} \frac{(n-2r-2)(n-2r-1)}{2} \langle f, K^rf \rangle_\pi^2 \\
            &\quad + \frac{24}{n^2}\sum_{2 \leq r+r' \leq n-2} \frac{\left[n-(r+r')-2\right] \left[n-(r+r')-1\right]}{2} \langle  f, K^rf \rangle_\pi \langle f, K^{r'} f\rangle_\pi \\
            &\xrightarrow{(\star_9)} 12\left[\sum_{r=1}^\infty \langle f, K^r f \rangle_\pi^2 + \sum_{r \neq r'} \langle f, K^r f\rangle_\pi \langle f, K^{r'} f \rangle_\pi \right] \\ 
            &= 3 \left(2\sum_{r=1}^\infty \langle f, K^r f \rangle_\pi \right)^2 = 3 \left(2\sum_{r=1}^\infty \gamma_r \right)^2 \numbereqn \label{U_5_1_limit}
            \end{align*}
            where $(\star_9)$ follows from the dominated convergence theorem. As for $U_5^{(2)}$, observe that
            \[
            \left|U_5^{(2)} \right| \le \frac{24}{n^2} \sum_{i<j<k<l} \left| E_\pi \left(\left[Y_i Y_j - E_\pi\left(Y_i Y_j\right)\right] Y_k Y_l\right) \right|.
            \]
            
            \noindent Now for $i<j<k<l$, 
            \begin{align*}
            \left| E_\pi \left(\left[Y_i Y_j - E_\pi\left(Y_i Y_j\right)\right] Y_k Y_l\right) \right| 
            &= \left|E_\pi \left(\left[Y_i Y_j - E_\pi\left(Y_i Y_j\right)\right] Y_k \ K^{l-k} f_0(X_k)\right) \right| \\
            &\stackrel{(\star_{10})}{\leq} \left[E_\pi\left( \left[Y_i Y_j - E_\pi \left(Y_i Y_j\right)\right]^2 Y_k^2 \right)\right]^{\frac{1}{2}} \left[ E_\pi\left(K^{l-k} f_0(X_k)\right)^2 \right]^{\frac{1}{2}} \\
            &\leq 8 \sqrt{\mu_2 \mu_6} \ \lambda^{l-k} \numbereqn \label{U_5_2_term_upper_bd_1}
            \end{align*}
            and due to reversibility, 
            \begin{align*}
            \left| E_\pi \left(\left[Y_i Y_j - E_\pi\left(Y_i Y_j\right)\right] Y_k Y_l\right) \right|
            &= \left|E_\pi\left(Y_i Y_j \left[Y_k Y_l - E_\pi \left(Y_k Y_l\right) \right] \right) \right| 
            \leq 8 \sqrt{\mu_2\mu_6} \ \lambda^{j-i} \numbereqn \label{U_5_2_term_upper_bd_2}. 
            \end{align*}
            Finally, we let
            \[
            H(X_j) = E \left[ \left(Y_i Y_j - E_\pi\left(Y_i Y_j\right)\right) \mid X_j, X_k, X_l \right]
            = E \left[\left(Y_i Y_j - E_\pi\left(Y_i Y_j\right)\right) \mid X_j\right] \in L_0^2(\pi)
            \]
            with the equality being a consequence of the Markov property. Then, for $i < j < k < l$,
            \begin{align*}
            \left| E_\pi \left(\left[Y_i Y_j - E_\pi\left(Y_i Y_j\right)\right] Y_k Y_l\right) \right|
            &=\left|E_\pi \left(H(X_j) Y_k Y_l \right)\right| \\
            &= \left|E_\pi  E \left[H(X_j) Y_k Y_l\mid X_k, X_l\right]\right| \\
            &\leq E_\pi \left|K^{k-j} H(X_k) Y_k Y_l \right| \\
            &\stackrel{(\star_{11})}{\leq} \left[E_\pi \left(K^{k-j} H(X_k) \right)^2 \right]^{\frac{1}{2}} \left[E_\pi \left(Y_k^2 Y_l^2\right)\right]^{\frac{1}{2}} \\
            &\stackrel{(\star_{12})}{\leq} \lambda^{k-j} \left(E_\pi\left[H^2(X_k)\right]\right)^{\frac{1}{2}}\ \mu_4^{\frac{1}{2}} \\
            &\leq 4 \lambda^{k-j}\ \mu_4. \numbereqn \label{U_5_2_term_upper_bd_3}
            \end{align*}
            
            \noindent It follows from \eqref{U_5_2_term_upper_bd_1}, \eqref{U_5_2_term_upper_bd_2} and \eqref{U_5_2_term_upper_bd_3} 
            that \begin{align*}
            \left| E_\pi \left(\left[Y_i Y_j - E_\pi\left(Y_i Y_j\right)\right] Y_k Y_l\right) \right|
            &\leq \min \left\{\lambda^{l-k}, \lambda ^{j-i}, \lambda^{k-j} \right\} \left(4\mu_4 + 8\sqrt{\mu_2 \mu_6} \right) \\
            &\leq \lambda^{\max\{l-k, j-i, k-j\}} \left(4 \mu_4 + 8\sqrt{\mu_2 \mu_6}\right) \\
            &= \left(\lambda^{\frac{1}{3}}\right)^{3\max\{l-k,j-i,k-j\}} \left(4\mu_4 + 8\sqrt{\mu_2\mu_6}\right) \\
            &\leq \left(\lambda^{\frac{1}{3}}\right)^{l-i} \left(4\mu_4 + 8\sqrt{\mu_2\mu_6}\right).  
            \end{align*}
            Hence, 
            \begin{align*}
            \left|U_5^{(2)}\right| 
            &\leq \frac{24}{n^2} \sum_{i<j<k<l} \left(\lambda^{\frac{1}{3}}\right)^{l-i} \left(4\mu_4 + 8\sqrt{\mu_2\mu_6}\right) \\
            &=\frac{24}{n^2} \left(4\mu_4 + 8\sqrt{\mu_2\mu_6}\right) \sum_{r=3}^{n-1} (n-r) \binom{r-1}{2} \left(\lambda^{\frac{1}{3}}\right)^r \\
            &\leq \frac{24}{n}\left(4\mu_4 + 8\sqrt{\mu_2\mu_6}\right) \sum_{r=1}^\infty r^2 \left(\lambda^{\frac{1}{3}}\right)^r \\
            &=  \frac{24}{n}\left(4\mu_4 + 8\sqrt{\mu_2\mu_6}\right) \lambda^{\frac{1}{3}} \left(1 + \lambda^{\frac{1}{3}}\right) \left(1 - \lambda^{\frac{1}{3}}\right)^{-3}
            \to 0 \text{ as } n \to \infty. \numbereqn \label{U_5_2_limit} 
            \end{align*}
            Therefore, from \eqref{U_5_1_limit} and \eqref{U_5_2_limit}, it follows that
            \begin{equation} \label{U_5_limit}
            U_5 \to 3 \left(2\sum_{r=1}^\infty \gamma_r \right)^2 \text{ as } n \to \infty.
            \end{equation}
            
            \noindent Finally, combining \eqref{U_1_limit}, \eqref{U_2_limit}, \eqref{U_3_limit}, \eqref{U_4_limit} and \eqref{U_5_limit}, we get
            \begin{align*}
            \frac{1}{n^2} E_\pi\left[\left(Y_1 + Y_2 + \dots + Y_n \right)^4 \right] 
            &= U_1 + U_2 + U_3 + U_4 + U_5 \\
            &\to 3\mu_2^2 + 12 \mu_2 \sum_{r=1}^\infty \gamma_r + 3\left(2 \sum_{r=1}^\infty\gamma_r \right)^2 \\
            &= 3\left(\mu_2 + 2\sum_{r=1}^\infty \gamma_r \right)^2 = 3 \sigma_f^4 \text{ as } n\to\infty.
            \end{align*}
            This completes the proof.
        \end{proof}

        \begin{proposition} \label{prop_Y1barsqY2barsq}
            Under the setup assumed in Theorem~\ref{thm_clt_bme}, and if in addition the Markov chain is stationary, then	$E_\pi\left(b_n^2 \bar{Y}_1^2\bar{Y}_2^2\right) \to \sigma_f^4$ as $n \to \infty$.
        \end{proposition}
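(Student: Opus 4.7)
The plan is to adapt the expansion-and-classification strategy of Proposition~\ref{prop_Y1bar4}, now applied to a product of sums over two disjoint batches rather than to a single fourth-power sum. Writing $Y_i = f_0(X_i)$ with $f_0 = f - E_\pi f$, I would start from
\[
E_\pi\!\left(b_n^2 \bar Y_1^2 \bar Y_2^2\right) = \frac{1}{b_n^2} \sum_{i_1, i_2 = 1}^{b_n} \sum_{j_1, j_2 = b_n+1}^{2b_n} E_\pi[Y_{i_1} Y_{i_2} Y_{j_1} Y_{j_2}]
\]
and partition the quadruples into four cases depending on whether $i_1 = i_2$ and whether $j_1 = j_2$. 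In every case, the Markov property together with reversibility (self-adjointness of $K$) routes the four-point expectation through powers of the Markov operator, and geometric ergodicity ($\lambda := \|K\| < 1$ on $L_0^2(\pi)$) isolates a leading contribution from an exponentially small residual.

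In the double-coincidence case $(i_1 = i_2,\ j_1 = j_2)$, centering $f_0^2$ gives $E_\pi[Y_i^2 Y_j^2] = \mu_2^2 + \langle f_0^2 - \mu_2,\ K^{j-i}(f_0^2 - \mu_2)\rangle_\pi$, whose summed contribution is $\mu_2^2 + o(1)$. The two single-coincidence cases are handled by conditioning on the coincident index and introducing $\psi_c(x) := f_0(x)\, K^c f_0(x)$ and $\tilde\psi_c := \psi_c - \gamma_c \in L_0^2(\pi)$; this yields $E_\pi[Y_i^2 Y_{j_1} Y_{j_2}] = \mu_2\, \gamma_{j_2-j_1} + O(\lambda^{j_1-i})$ and together contribute $4\mu_2 \sum_{k=1}^\infty \gamma_k + o(1)$. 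For the no-coincidence case $i_1 < i_2 \leq b_n < j_1 < j_2$, iterated Markov conditioning together with self-adjointness produces the key operator identity
\[
E_\pi[Y_{i_1} Y_{i_2} Y_{j_1} Y_{j_2}] - \gamma_{i_2-i_1}\gamma_{j_2-j_1} = \langle \tilde\psi_{i_2-i_1},\ K^{j_1 - i_2}\, \tilde\psi_{j_2-j_1}\rangle_\pi,
\]
whose leading part sums to $4\bigl(\sum_{k=1}^\infty \gamma_k\bigr)^2 + o(1)$. Combining the four cases yields $\mu_2^2 + 4\mu_2 \sum_k \gamma_k + 4\bigl(\sum_k \gamma_k\bigr)^2 = \sigma_f^4$, as required.

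The main obstacle is controlling the residual from the no-coincidence case, a sum over $O(b_n^4)$ quadruples. Interchanging the summations over $i_1$ and $j_2$ and using bilinearity of $\langle \cdot, \cdot\rangle_\pi$ compresses this residual into
\[
\frac{4}{b_n^2} \sum_{i_2 = 2}^{b_n}\sum_{j_1 = b_n + 1}^{2b_n - 1} \langle \Psi_{i_2 - 1},\ K^{j_1 - i_2}\, \Psi_{2b_n - j_1}\rangle_\pi, \qquad \Psi_m := \sum_{a=1}^m \tilde\psi_a,
\]
which by Cauchy--Schwarz and $\|K^d\| \leq \lambda^d$ is bounded by $(\sup_m \|\Psi_m\|_\pi)^2$ times $b_n^{-2} \sum_{d=1}^\infty d\, \lambda^d = O(b_n^{-2})$, provided $\sup_m \|\Psi_m\|_\pi < \infty$. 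Establishing this uniform bound is the heart of the argument, since a mere bound $\|\tilde\psi_a\|_\pi \leq C$ would give only $\|\Psi_m\|_\pi = O(m)$ by the triangle inequality. I would instead derive the sharper decay $\|\tilde\psi_a\|_\pi = O(\lambda^{a/2})$ from
\[
\|\tilde\psi_a\|_\pi^2 \leq E_\pi\!\left[(f_0\, K^a f_0)^2\right] = \langle f_0^2\, K^a f_0,\ K^a f_0\rangle_\pi \leq \|f_0^2\, K^a f_0\|_\pi \cdot \lambda^a\, \|f_0\|_\pi,
\]
where $\|f_0^2\, K^a f_0\|_\pi \leq (\mu_8\, \mu_4)^{1/4}$ follows from two applications of Cauchy--Schwarz together with Jensen's inequality $E_\pi[(K^a f_0)^4] \leq E_\pi[f_0^4]$ and uses the moment assumption $E_\pi f^8 < \infty$. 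Summability of $\sum_a \lambda^{a/2}$ then gives $\sup_m \|\Psi_m\|_\pi \leq \sum_{a=1}^\infty \|\tilde\psi_a\|_\pi < \infty$, closing the argument.
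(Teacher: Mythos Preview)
Your proposal is correct and uses the same four-case decomposition (by whether $i_1=i_2$ and $j_1=j_2$) as the paper. The paper's proof is terse: it records the leading contributions $\mu_2^2$, $2\mu_2 b_n^{-1}\sum_{i<i'}\gamma_{i'-i}$ (twice), and $4b_n^{-2}\bigl(\sum_{i<i'}\gamma_{i'-i}\bigr)\bigl(\sum_{j<j'}\gamma_{j'-j}\bigr)$, labels the centered residuals $T_1,\dots,T_4$, and disposes of them by invoking ``analysis similar to the proof of Proposition~\ref{prop_Y1bar4}'', i.e.\ the three-way $\min\{\lambda^{j-i},\lambda^{k-j},\lambda^{l-k}\}\le(\lambda^{1/3})^{l-i}$ trick used for $U_5^{(2)}$ there.

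Where you differ is in the control of the residuals. Instead of the min-of-exponents device, you derive the operator identity
\[
E_\pi[Y_{i_1}Y_{i_2}Y_{j_1}Y_{j_2}]-\gamma_{i_2-i_1}\gamma_{j_2-j_1}=\langle\tilde\psi_{i_2-i_1},K^{j_1-i_2}\tilde\psi_{j_2-j_1}\rangle_\pi,
\]
sum the outer indices into $\Psi_m=\sum_{a\le m}\tilde\psi_a$, and then show $\sup_m\|\Psi_m\|_\pi<\infty$ via the decay $\|\tilde\psi_a\|_\pi=O(\lambda^{a/2})$. This is a genuinely different and somewhat cleaner mechanism tailored to the two-batch structure: it gives an $O(b_n^{-2})$ bound on the no-coincidence residual (versus $O(b_n^{-1})$ from the paper's cruder trick) and makes essential use of the eighth-moment assumption through $\mu_8$ in the bound for $\|f_0^2\,K^af_0\|_\pi$. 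The paper's route, by contrast, recycles the existing $U_5^{(2)}$-type estimates wholesale and needs no new ingredient. One small point to tighten in your single-coincidence case: the stated bound $O(\lambda^{j_1-i})$ alone, summed over $j_2$, would only give $O(1)$; you should either sum over $j_2$ first into $\Psi_{2b_n-j_1}$ (mirroring your no-coincidence argument) or invoke the $\lambda^{c/2}$ decay of $\|\tilde\psi_c\|_\pi$ before summing, either of which yields the claimed $o(1)$.
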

        \begin{proof}
            We have
            \begin{align*}
            E_\pi\left(b_n^2 \bar{Y}_1^2\bar{Y}_2^2\right)
            &= \frac{1}{b_n^2} \left[E_\pi \left(\sum_{i=1}^{b_n} \sum_{j=b_n+1}^{2b_n} Y_i^2 Y_j^2 \right) + E_\pi  \left(\sum_{i\neq i'} \sum_{j=b_n+1}^{2b_n} Y_i Y_{i'} Y_j^2 \right) \right.  \\
            &\qquad \quad +  \left. E_\pi \left(\sum_{i=1}^{b_n} \sum_{j\ne j'} Y_i^2 Y_j Y_{j'} \right) + E_\pi\left(\sum_{i\ne i'} \sum_{j\ne j'} Y_i Y_{i'}^2 Y_j Y_{j'} \right) \right] \\
            &= \mu_2^2 + \frac{1}{b_n^2} E_\pi \left[\sum_{i=1}^{b_n} \sum_{j=b_n+1}^{2b_n} Y_j^2  \left[Y_i^2 - E_\pi\left(Y_i^2 \right) \right] \right] \\
            &\quad + \frac{1}{b_n^2} 2 b_n \mu_2 \sum_{i<i'} \langle f_0, K^{i'-i} f_0 \rangle_\pi + \frac{1}{b_n^2} E_\pi \left[\sum_{i\ne i'} \sum_{j-b_n+1}^{2b_n} Y_i Y_{i'} \left[Y_j^2 - E_\pi \left(Y_j^2 \right) \right]\right] \\
            &\quad +  \frac{1}{b_n^2} 2 b_n \mu_2 \sum_{j<j'} \langle f_0, K^{j'-j} f_0 \rangle_\pi + \frac{1}{b_n^2} E_\pi \left[\sum_{i=1}^{b_n} \sum_{j\ne j'} Y_j Y_{j'} \left[Y_i^2 - E_\pi \left(Y_i^2\right) \right]\right] \\
            &\quad + \frac{4}{b_n^2} \left(\sum_{i<i'} \langle f_0, K^{i'-i} f\rangle_\pi\right) \left(\sum_{j<j'} \langle f_0, K^{j'-j} f_0 \rangle_\pi \right) \\
            &\quad + \frac{1}{b_n^2} E_\pi \left[\sum_{i\ne i'} \sum_{j\ne j'} \left[Y_i Y_{i'} - E_\pi \left(Y_i Y_{i'}\right) \right] \left[Y_j Y_{j'} - E_\pi \left(Y_jY_{j'}\right) \right]\right] \\
            &= \mu_2^2 + T_1 + \frac{1}{b_n} 2 \mu_2 \sum_{i<i'} \langle f_0, K^{i'-i} f_0 \rangle_\pi + T_2 + \frac{1}{b_n} 2  \mu_2 \sum_{j<j'} \langle f_0, K^{j'-j} f_0 \rangle_\pi \\
            & \quad  + T_3 +  \frac{4}{b_n^2} \left(\sum_{i<i'} \langle f_0, K^{i'-i} f\rangle_\pi\right) \left(\sum_{j<j'} \langle f_0, K^{j'-j} f_0 \rangle_\pi \right) + T_4, \text{ say}.
            \end{align*}		
            By analysis similar to the proof of Proposition~\ref{prop_Y1bar4}, it follows that for each $i = 1, 2, 3, 4$, $T_i \to 0$ as $n \to \infty$. Therefore, by the dominated convergence theorem, as $n \to \infty$,
            \begin{equation*}
            E_\pi \left(b_n^2 \bar{Y_1}^2\bar{Y_2}^2\right) \to \mu_2^2 + 4\mu_2\sum_{r=1}^{\infty} \langle f_0, K^r f_0\rangle_\pi + \left(2\sum_{r=1}^{\infty} \langle f_0, K^r f_0\rangle_\pi\right)^2 
            = \left(\mu_2 + \sum_{r = 1}^\infty \langle f_0, K^r f_0\rangle_\pi \right)^2 = \sigma_f^4.
            \end{equation*}
            This completes the proof.
            
        \end{proof}

        \begin{proposition} \label{prop_U_n}
            Consider the quantity $U_n$ as defined in \eqref{defn_U_n}. We have $E_\pi (U_n^2) \to 2 \sigma_f^4$ as $n \to \infty$.
        \end{proposition}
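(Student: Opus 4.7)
The plan is to decompose $U_n = A_n + B_n$, where
$A_n := W_1 - E(W_1 \mid X_0)$ and $B_n := b_n \tilde h(X_{b_n})$
(with shorthand $W_1 := b_n \bar Y_1^2$), and to analyze the three terms in the expansion $E_\pi(U_n^2) = E_\pi(A_n^2) + 2 E_\pi(A_n B_n) + E_\pi(B_n^2)$ separately. The diagonal terms are engineered to collapse: by iterated conditioning $E_\pi(A_n^2) = \var_\pi(W_1) - \var_\pi(E(W_1 \mid X_0))$, while stationarity together with the identity $E(W_1 \mid X_0) = b_n h(X_0)$ gives $E_\pi(B_n^2) = b_n^2 \|\tilde h\|_\pi^2 = \var_\pi(b_n h(X_0)) = \var_\pi(E(W_1 \mid X_0))$, so that $E_\pi(A_n^2) + E_\pi(B_n^2) = \var_\pi(W_1) = E_\pi(b_n^2 \bar Y_1^4) - (b_n E_\pi \bar Y_1^2)^2$. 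The first summand converges to $3\sigma_f^4$ by Proposition~\ref{prop_Y1bar4}, and $b_n E_\pi \bar Y_1^2 \to \sigma_f^2$ follows from the same dominated-convergence argument used to prove Lemma~\ref{LEMMA_1} (via $|\gamma_k| \le \lambda^k \|f_0\|_\pi^2$), so the diagonal part contributes $3\sigma_f^4 - \sigma_f^4 = 2\sigma_f^4$ in the limit.

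The remaining task is then to show $E_\pi(A_n B_n) \to 0$, which I would handle by splitting this as $E_\pi(W_1 B_n) - E_\pi[E(W_1 \mid X_0) B_n]$. For the first piece, the Markov property yields $E(\bar Y_2^2 \mid X_1, \dots, X_{b_n}) = h(X_{b_n})$, so $b_n^2 E_\pi[\bar Y_1^2 h(X_{b_n})] = E_\pi(b_n^2 \bar Y_1^2 \bar Y_2^2) \to \sigma_f^4$ by Proposition~\ref{prop_Y1barsqY2barsq}, while the remaining subtracted piece $b_n^2 E_\pi \bar Y_1^2 \cdot E_\pi h$ also tends to $\sigma_f^2 \cdot \sigma_f^2 = \sigma_f^4$ (using $E_\pi h = E_\pi \bar Y_1^2$ by stationarity), so the two cancel in the limit. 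For the second piece, expanding $h = \tilde h + E_\pi h$ and invoking $\int \tilde h \, d\pi = 0$ together with stationarity collapses the expression to $b_n^2 \langle \tilde h, K^{b_n} \tilde h\rangle_\pi$, which by Cauchy--Schwarz and the operator norm bound $\|K^{b_n} \tilde h\|_\pi \le \lambda^{b_n} \|\tilde h\|_\pi$ is controlled by $\lambda^{b_n} \cdot b_n^2 \|\tilde h\|_\pi^2$.

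The main obstacle lies in this last estimate: one cannot hope for $\|\tilde h\|_\pi$ to be small on its own, but the law of total variance yields the exact identity $b_n^2 \|\tilde h\|_\pi^2 = \var_\pi(E(W_1 \mid X_0)) \le \var_\pi(W_1)$, so boundedness of this product is a free by-product of the fourth-moment limit $\var_\pi(W_1) \to 2\sigma_f^4$ already established in the diagonal step. Coupled with $\lambda^{b_n} \to 0$ (using $\lambda = \|K\| < 1$ from geometric ergodicity and $b_n \to \infty$), this gives $E_\pi[E(W_1 \mid X_0) B_n] \to 0$, hence $E_\pi(A_n B_n) \to 0$, and assembling the three pieces yields $E_\pi(U_n^2) \to 2\sigma_f^4$. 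The delicate point to watch is that the apparent $b_n^2$-blowup in the cross term is exactly absorbed by the $b_n^{-2}$-rate decay of $\|\tilde h\|_\pi^2$ that follows from the fourth-moment estimate --- there is no slack, and this tight matching is what makes the cross term benign.
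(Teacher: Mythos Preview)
Your proof is correct and is essentially the same as the paper's. The paper expands $E_\pi(U_n^2)$ directly into six terms and simplifies to the exact identity
\[
E_\pi(U_n^2) = E_\pi(b_n^2 \bar Y_1^4) - 3\bigl[E_\pi(b_n \bar Y_1^2)\bigr]^2 + 2b_n^2 E_\pi(\bar Y_1^2 \bar Y_2^2) - 2b_n^2 \langle \tilde h, K^{b_n} \tilde h\rangle_\pi,
\]
then applies Propositions~\ref{prop_Y1bar4} and~\ref{prop_Y1barsqY2barsq}, the convergence $b_n E_\pi \bar Y_1^2 \to \sigma_f^2$, and the bound $|b_n^2 \langle \tilde h, K^{b_n}\tilde h\rangle_\pi| \le \lambda^{b_n} E_\pi(b_n^2 \bar Y_1^4)$; your $A_n + B_n$ decomposition, once unpacked, yields the same identity with the same four ingredients, the only cosmetic difference being that the law-of-total-variance step packages the diagonal cancellation $E_\pi(A_n^2)+E_\pi(B_n^2)=\var_\pi(W_1)$ more cleanly up front.
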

        
        \begin{proof}
            We have,
            \begin{align*}
            E_\pi(U_n^2) 
            &= E_\pi\left[b_n \bar{Y}_1^2 - b_n h(X_0) + b_n \tilde{h}(X_{b_n}) \right]^2 \\
            &= E_\pi \left(b_n^2 \bar{Y}_1^4 \right) + E_\pi\left[b_n^2 h^2(X_0) \right] + E_\pi\left[b_n^2 \tilde{h}^2(X_{b_n}) \right] - 2 E_\pi \left[b_n^2 \bar{Y}_1^2 h(X_0) \right] \\
            &\qquad -  2 E_\pi \left[b_n^2 h(X_0) \tilde{h}(X_{b_n}) \right] + 2 E_\pi\left[b_n^2 \bar{Y}_1^2 \tilde{h}(X_{b_n})\right] \\
            &= E_\pi \left(b_n^2\bar Y_1^4\right) + E_\pi\left[b_n^2 h^2(X_0) \right] + E_\pi\left[b_n^2 h^2(X_0) \right] - \left[E_\pi \left(b_n\bar Y_1^2\right) \right]^2 \\
            &\qquad  - 2 E_\pi \left[b_n^2 h^2(X_0) \right] - 2 b_n^2 \langle \tilde{h}, K^{b_n} \tilde{h} \rangle_\pi + 2 b_n^2 E_\pi\left(\bar Y_1^2\ \bar{Y}_2^2 \right) - 2 b_n^2 E_\pi \left(\bar Y_1^2\right) E_\pi\left(\bar{Y}_2^2\right) \\
            &= E_\pi \left(b_n^2 \bar Y_1^4 \right) - 3\left[E_\pi\left(b_n \bar Y_1^2\right)\right]^2 + 2 b_n^2 E_\pi \left(\bar Y_1^2\ \bar{Y}_2^2\right) - 2b_n^2 \langle \tilde{h}, K^{b_n} \tilde{h} \rangle_\pi
            \end{align*} 
            Of course, $E_\pi\left(b_n \bar Y_1^2 \right) \to \sigma_f^2$, and from Proposition~\ref{prop_Y1bar4} and \ref{prop_Y1barsqY2barsq}, it follows that as $n \to \infty$, $E_\pi (b_n^2 \bar Y_1^4) \to 3\sigma_f^4$ and $b_n^2 E_\pi(\bar Y_1^2\ \bar{Y}_2^2) \to  \sigma_f^4$ respectively. Finally, 
            \[
            \left| b_n^2 \langle \tilde{h}, K^{b_n} \tilde{h} \rangle_\pi \right| \leq b_n^2\ \lambda^{b_n} \left\|\tilde h \right\|^2_\pi \leq \lambda^{b_n}\ E_\pi\left(b_n^2 \bar Y_1^4\right) \to 0
            \]
            as $n \to \infty$. Consequently,
            \[
            E_\pi \left(U_n^2 \right) \to 2 \sigma_f^4 \text{ as } n \to \infty.
            \]
            This completes the proof.
        \end{proof}

        \begin{proposition} \label{prop_V_n}
            Consider the quantity $V_n$ as defined in \eqref{defn_V_n}. We have $E_\pi (V_n^2) \to 0$ as $n \to \infty$.
        \end{proposition}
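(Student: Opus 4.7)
The plan is to exploit the identity $\tilde h = \tilde g - K^{b_n}\tilde g$ (which defined $\tilde g$ in the proof of Lemma~\ref{LEMMA_2}) to collapse $V_n$ to a pair of $K^{b_n}\tilde g$ terms, and then to kill those terms using the exponential contractivity of $K^{b_n}$. Concretely, writing $b_n\tilde g(X_{b_n}) - b_n\tilde h(X_{b_n}) = b_n(K^{b_n}\tilde g)(X_{b_n})$ and, by the Markov property, $E(b_n\tilde g(X_{b_n})\mid X_0) = b_n(K^{b_n}\tilde g)(X_0)$, the defining expression \eqref{defn_V_n} reduces to
\[
V_n = b_n(K^{b_n}\tilde g)(X_{b_n}) - b_n(K^{b_n}\tilde g)(X_0).
\]

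From here I would apply $(a-b)^2\leq 2a^2+2b^2$ together with stationarity of $\mcx$ to obtain
\[
E_\pi(V_n^2) \;\leq\; 2b_n^2\,E_\pi(K^{b_n}\tilde g)^2(X_{b_n}) + 2b_n^2\,E_\pi(K^{b_n}\tilde g)^2(X_0) \;=\; 4b_n^2\,\|K^{b_n}\tilde g\|_\pi^2.
\]
The operator norm inequality $\|K^{b_n}\tilde g\|_\pi \leq \lambda^{b_n}\|\tilde g\|_\pi$ (with $\lambda=\|K\|<1$ by geometric ergodicity) then gives
\[
E_\pi(V_n^2) \;\leq\; 4\lambda^{2b_n}\,\|b_n\tilde g\|_\pi^2.
\]

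The final step is to note that $\|b_n\tilde g\|_\pi^2$ is bounded in $n$: indeed, this is exactly the bound already established inside the proof of Lemma~\ref{LEMMA_2}, where it was shown that
\[
\|b_n\tilde g(X_{kb_n})\|_\pi^2 \;\leq\; (1-\lambda^{b_n})^{-2}\,E_\pi\!\left(b_n^2\bar Y_k^4\right) \;\longrightarrow\; 3\sigma_f^4
\]
by Proposition~\ref{prop_Y1bar4}. Since $\lambda^{2b_n}\to 0$ as $b_n\to\infty$, combining the two bounds yields $E_\pi(V_n^2)\to 0$.

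I do not expect any serious obstacle here: the proof is essentially a one-line operator-theoretic estimate once the algebraic simplification of $V_n$ via $\tilde h=\tilde g-K^{b_n}\tilde g$ is carried out. The only thing to be careful about is pairing the growing factor $b_n^2$ with the $(1-\lambda^{b_n})^{-2}$ factor in a way that keeps the product $\|b_n\tilde g\|_\pi^2$ bounded, and citing Proposition~\ref{prop_Y1bar4} to close that bound; everything else is routine.
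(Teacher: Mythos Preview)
Your proposal is correct and follows essentially the same route as the paper. The only cosmetic difference is that you write $\tilde g-\tilde h=K^{b_n}\tilde g$ directly, whereas the paper writes the same object as $\big((I-K^{b_n})^{-1}-I\big)\tilde h$ before bounding; both lead to the identical final estimate $E_\pi(V_n^2)\le 4\lambda^{2b_n}(1-\lambda^{b_n})^{-2}E_\pi(b_n^2\bar Y_1^4)\to 0$ via Proposition~\ref{prop_Y1bar4}.
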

        
        \begin{proof}
            We have
            \begin{align*}
            E_\pi[V_n^2] 
            &= E_\pi \left[b_n \tilde{g}\left(X_{b_n}\right) - b_n\tilde{h} \left(X_{b_n}\right) - E_\pi \left[b_n\tilde{g} \left(X_{b_n}\right) \mid X_0 \right]\right]^2 \\
            &= E_\pi\left[b_n \left( \left(I-K^{b_n}\right)^{-1} - I\right)\tilde{h}(X_{b_n})-K^{b_n}\tilde{g}(X_0)\right]^2\\
            &\leq 2E_\pi\left[b_n\left( \left(I-K^{b_n}\right)^{-1} - I\right)\tilde{h}(X_{b_n})\right]^2+2E_\pi\left[b_nK^{b_n}\tilde{g}(X_0)\right]^2 \\
            &\leq 2 \left\| \left(I-K^{b_n}\right)^{-1} - I\right\|^2 \left\|b_n\tilde{h}\right\|_\pi^2 + 2 \left\|K^{b_n}\right\|^2 \left\|b_n\tilde{g}\right\|_\pi^2 \\
            &\leq 2 \left\| \left(I-K^{b_n}\right)^{-1} - I\right\|^2 \left\|b_n\tilde{h}\right\|_\pi^2 + 2 \left\|K^{b_n}\right\|^2 \left\| \left(I-K^{b_n}\right)^{-1} \right\|^2 \left\|b_n \tilde{h}\right\|_\pi^2 \\
            &\leq 2\frac{\lambda^{2b_n}}{\left(1-\lambda^{b_n}\right)^2} E_\pi\left(b_n^2\bar{Y}_1^4\right) + 2\frac{\lambda^{2b_n}}{\left(1-\lambda^{b_n}\right)^2} E_\pi\left(b_n^2\bar{Y}_1^4\right) \\
            &= 4\frac{\lambda^{2b_n}}{\left(1-\lambda^{b_n}\right)^2} E_\pi\left(b_n^2\bar{Y}_1^4\right) 
            \end{align*}
            where $\lambda = \|K\| \in (0, 1)$. From Proposition~\ref{prop_Y1bar4} it follows that $E_\pi\left(b_n^2\bar{Y}_1^4\right) \to 3 \sigma_f^4$. Hence, $E_\pi(V_n^2) \to 0$ as $n \to \infty$. This completes the proof.
        \end{proof}

    \end{appendix}
\end{document}